\newtheorem{theorem}{Theorem}
\newtheorem{lemma}{Lemma}
\newtheorem{proposition}{Proposition}
\newtheorem{corollary}{Corollary}
\newtheorem{property}{Property}
\newtheorem{remark}{Remark}
\newtheorem{claim}{Claim}
\begin{document}

\title{{Angle Aware User Cooperation for Secure Massive MIMO in Rician Fading Channel}}

\author{
Shuai~Wang,~\IEEEmembership{Member,~IEEE}, 
Miaowen~Wen,~\IEEEmembership{Senior~Member,~IEEE}, 
Minghua~Xia,~\IEEEmembership{Member,~IEEE}, 
Rui~Wang,~\IEEEmembership{Member,~IEEE}, 
Qi~Hao,~\IEEEmembership{Member,~IEEE}, 
and~Yik-Chung~Wu,~\IEEEmembership{Senior~Member,~IEEE}

\thanks{
The work was supported by the National Natural Science Foundation of China under Grants 61771232 and 61773197, and the Natural Science Foundation of Guangdong Province under Grant 2017A030313335.
The work was also supported by the Natural Science Foundation of Guangdong Province under Grant 2018B030306005.
The work was supported in part by the Major Science and Technology Special Project of Guangdong Province under Grant 2018B010114001, in part by the Fundamental Research Funds for the Central Universities under Grant 191gjc04. (\textit{Corresponding Author: Rui Wang.})

S. Wang and R. Wang are with the Department of Electrical and Electronic Engineering, Southern University of Science and Technology, Shenzhen 518055, China (e-mail: \{wangs3, wang.r\}@sustech.edu.cn).

M. Wen is with the School of Electronic and Information Engineering, South China University of Technology, Guangzhou 510641, China (e-mail: eemwwen@scut.edu.cn).

M. Xia is with the School of Electronics and Information Technology, Sun Yat-sen University, Guangzhou 510006, China, and also with the Southern Marine Science and Engineering Guangdong Laboratory, Zhuhai 519082, China (e-mail: xiamingh@mail.sysu.edu.cn).

Q. Hao is with the Department of Computer Science and Engineering, Southern University of Science and Technology, Shenzhen 518055, China (e-mail: hao.q@sustech.edu.cn).

Y.-C. Wu is with the Department of Electrical and Electronic Engineering, The University of Hong Kong, Hong Kong (e-mail: ycwu@eee.hku.hk).}
}

\maketitle

\begin{abstract}
Massive multiple-input multiple-output communications can achieve high-level security by concentrating radio frequency signals towards the legitimate users.
However, this system is vulnerable in a Rician fading environment if the eavesdropper positions itself such that its channel is highly ``similar'' to the channel of a legitimate user.
To address this problem, this paper proposes an angle aware user cooperation (AAUC) scheme, which avoids direct transmission to the attacked user and relies on other users for cooperative relaying.
The proposed scheme only requires the eavesdropper's angle information, and adopts an angular secrecy model to represent the average secrecy rate of the attacked system.
With this angular model, the AAUC problem turns out to be nonconvex, and a successive convex optimization algorithm, which converges to a Karush-Kuhn-Tucker solution, is proposed.
Furthermore, a closed-form solution and a Bregman first-order method are derived for the cases of large-scale antennas and large-scale users, respectively.
Extension to the intelligent reflecting surfaces based scheme is also discussed.
Simulation results demonstrate the effectiveness of the proposed successive convex optimization based AAUC scheme, and also validate the low-complexity nature of the proposed large-scale optimization algorithms.
\end{abstract}

\begin{IEEEkeywords}
Angle aware user cooperation, first-order method, intelligent reflecting surfaces, massive MIMO, physical-layer security, Rician fading.
\end{IEEEkeywords}

\IEEEpeerreviewmaketitle

\section{Introduction}

\IEEEPARstart{W}hile 5G is expected to be commercially available in 2020, it cannot fully satisfy the performance requirements of massive connectivity \cite{B5G}.
As a result, beyond 5G aims to serve massive devices with lower latency, higher reliability, and better security.
Among the above metrics, security is considered as one of the most challenging problems in wireless communications \cite{security1}, and it is becoming even more imperative in the era of big data \cite{security2}, as the massive data usually contains privacy messages such as personal information and control signals.
Traditionally, security is enforced by adopting key encryption to prevent the eavesdropper from decoding the message transmitted by legitimate users \cite{security3}.
However, it is shown by Wyner \cite{secrecyrate1} that secure communication in the presence of an eavesdropper can be guaranteed without any key encryption,
as long as the users' channels are better than the eavesdropper's channel \cite{secrecyrate1,secrecyrate2,secrecyrate3,secrecyrate4}.
This is called physical-layer security.

\subsection{Motivation}
In order to achieve the above channel condition for physical-layer security, massive multiple-input multiple-output (MIMO) communication is a promising technique, since massive MIMO can concentrate the radio frequency (RF) signals towards the legitimate users while suppressing the RF power leaked to the eavesdropper \cite{massive1,massive2,massive3,massive_security1,massive_security2,massive_security3}.
This leads to a situation that the received signal power at the legitimate user is several orders of magnitude larger than the received signal power at the eavesdropper, thus enabling excellent security without any extra effort \cite{massive_security1,massive_security2,massive_security3}.
However, there are still chances for the eavesdropper to overhear the information via active and passive attacks \cite{massive_security1,massive_security2}.
In an active attack, the eavesdropper adopts pilot contamination to interfere the channel estimation procedure at the base station (BS) \cite{active1}, misleading the BS to transmit signals towards the eavesdropper.
Such an attack has been recently addressed by careful pilot designs \cite{active1,active2,active3}.
Nonetheless, a passive attack is more difficult to deal with, since the passive eavesdropper can hide itself and therefore the BS would have very limited knowledge about the eavesdropper's channel \cite{passive1,passive2,passive3,passive4}.
If the eavesdropper's channel is independent of users' channels, massive MIMO with artificial noise can be used to prevent eavesdropping \cite{passive1,passive2,passive3,passive4}.
But if the system operates in a Rician fading environment (i.e., there exists a line-of-sight (LoS) link \cite{jzhang,jzhang2,hyang}), the eavesdropper can position itself such that its channel to the BS is highly ``similar'' to the channel from a legitimate user to the BS \cite[Sec. IV-A]{massive_security1}, and this passive attack could pose great threats to the massive MIMO system.

\subsection{Contributions}
In this paper, we propose an angle aware user cooperation (AAUC) scheme to combat the above passive attack in massive MIMO systems.
The proposed AAUC scheme avoids direct transmission to the attacked user from BS and counts on other users for cooperative relaying.
Furthermore, in contrast to existing works (e.g., \cite{passive4}) that require full channel information of the eavesdropper, our scheme only requires partial channel information of the eavesdropper.
In particular, the proposed AAUC scheme exploits the information that the eavesdropper wants to have a similar channel to that of a legitimate user, for which its angle of departure would be similar to that of the attacked user.
Based on this angle awareness, the proposed AAUC scheme automatically reduces the transmit powers of users that are close to the potential eavesdropping region, and an optimization problem is formulated to maximize the average secrecy rate subject to the total power constraint.

Nonetheless, the formulated AAUC problem involves numerical integration in the objective function.
To circumvent this obstacle, an angular secrecy model, which matches the numerical integration very well, is proposed.
Based on the secrecy model, the numerical function is converted into an analytical yet nonconvex function, and a successive convex optimization (SCO) algorithm, which is guaranteed to converge to a Karush-Kuhn-Tucker (KKT) solution to the AAUC problem, is derived.
Furthermore, in the large-scale settings, the closed-form solution and the Bregman first-order method (BFOM) are proposed when the number of antennas and the number of users are large, respectively.
The two large-scale methods reduce the computation time by orders of magnitude compared to SCO based AAUC.
Via integration with intelligent reflecting surfaces (IRS), the power cost of AAUC can be further reduced.
Simulation results demonstrate the effectiveness of the proposed SCO based AAUC, and validate the low-complexity nature of the closed-form AAUC and the BFOM based AAUC algorithms.

\subsection{Outline}
The rest of this paper is organized as follows.
System model and problem formulation are described in Section II and Section III, respectively.
The angular secrecy model and the proposed SCO based AAUC scheme are presented in Section IV.
The large-scale optimization algorithms are derived in Section V, and extension to the IRS based scheme is discussed in Section VI.
Finally, numerical results are presented in Section VII, and conclusions are drawn in Section VIII.

\emph{Notation}:
Italic letters, lowercase and uppercase bold letters represent scalars, vectors, and matrices, respectively.
The operators $\textrm{Tr}(\cdot),(\cdot)^{T},(\cdot)^{H}$ and $(\cdot)^{-1}$ take the trace, transpose, Hermitian, and inverse of a matrix, respectively.
The operator $[x]^+=\mathrm{max}(x,0)$.
The symbol $\mathbf{I}_{N}$ represents the $N\times N$ identity matrix.
The symbol $\mathcal{CN}(0,1)$ represents complex Gaussian distribution with zero mean and unit variance, and $\mathcal{U}(x,y)$ represents the uniform distribution within the interval $[x,y]$.
Finally, $\mathbb{E}(\cdot)$ represents the expectation of a random variable, and $\mathrm{exp}(\cdot)$ represents the exponential function of a scalar.

\section{System Model}

\setcounter{secnumdepth}{4}We consider a multicast system with LoS links, which consists of a BS with $N$ antennas, $K$ single-antenna legitimate users, and a single-antenna eavesdropper.
As shown in Fig. 1, the BS intends to multicast common information to the $K$ legitimate users, while the eavesdropper intends to overhear the signal\footnote{The detection of a passive attack is based on spectrum sensing \cite{detection}.}.
In particular, the eavesdropper hides at the line segment between the BS and one of the users, denoted as user $K$.
In this way, the eavesdropper can receive a highly correlated signal with that of user $K$ \cite{massive_security1,massive_security2}, thus decoding the transmitted information.
To address this problem, the AAUC scheme consisting of two phases with equal duration is proposed, and the details are given below.

\begin{figure}[!t]
\centering
\includegraphics[width=60mm]{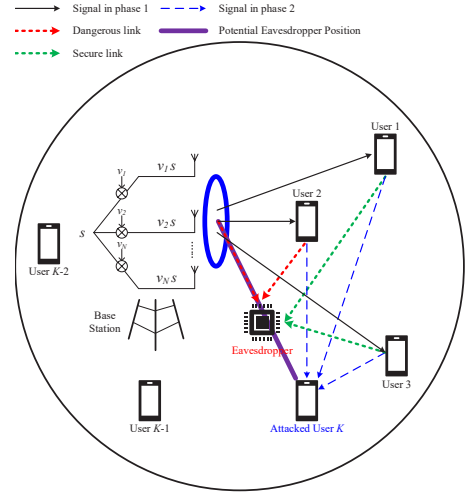}
\caption{System model of massive MIMO communication under the LoS attack.}
\end{figure}

\subsection{Multicasting Phase}

In the multicasting phase, the BS transmits a signal $s$ with $\mathbb{E}[|s|^2]=1$ to users $\{1,2,\cdots,K\}$ through a transmit beamforming vector $\mathbf{v}\in\mathbb{C}^{N\times 1}$ with power $||\mathbf{v}||^2_2$.
Accordingly, the received signal at a legitimate user $k$ is
$\mathbf{g}^H_{k}\mathbf{v}s+n_{k}$,
where $\mathbf{g}_{k}\in \mathbb{C}^{N\times 1}$ is the channel vector from the BS to the user $k$,
and $n_{k}\in \mathbb{C}$ is the zero-mean Gaussian noise at the user $k$ with power $\sigma_k^2$.
Denoting $D_k$ and $\theta_k$ as the distance and the azimuth angle (i.e., $0$ means ``north'' and $\pi/2$ means ``east'') between BS and user $k$, the channel $\mathbf{g}_{k}$ obeys the following Rician fading \cite{channel1,channel2}:
\begin{align}
\mathbf{g}_{k}&=\sqrt{\varrho_0\,\left(\frac{D_{k}}{d_0}\right)^{-\alpha}}\,\Bigg(\sqrt{\frac{K_R}{1+K_R}}\,\mathbf{g}_{k}^{\mathrm{LOS}}
\nonumber\\
&\quad{}
+\sqrt{\frac{1}{1+K_R}}\,\mathbf{g}_{k}^{\mathrm{NLOS}}\Bigg),~~k=1,\cdots,K, \label{racian}
\end{align}
where $\varrho_0$ is the pathloss at distance $d_0=1\,\mathrm{m}$, and $\alpha$ is the pathloss exponent.
Notice that $K_R$ is the Rician K-factor accounting for propagation effects of the LoS and non-LoS links, which can be pre-determined from a few channel measurements in the environment \cite{channel3}.
Furthermore, the LoS component $\mathbf{g}_{k}^{\mathrm{LOS}}$ is \cite{los}
\begin{align}\label{LOS}
\mathbf{g}_{k}^{\mathrm{LOS}}&=\Big[1,\mathrm{exp}\left(-\mathrm{j}\pi\,\mathrm{sin}\,\theta_k\right)
,\cdots,
\nonumber\\
&\quad{}
\mathrm{exp}\left(-(N-1)\,\mathrm{j}\pi\,\mathrm{sin}\,\theta_k\right)\Big]^T,
\end{align}
and the non-LoS component $\mathbf{g}_{k}^{\mathrm{NLOS}}\sim\mathcal{CN}(\mathbf{0},\mathbf{I}_N)$.

Since the angle of eavesdropper $\theta_E$ can be approximated as $\theta_E\approx\theta_K$, $\mathbf{g}_{E}^{\mathrm{LOS}}\approx\mathbf{g}_{K}^{\mathrm{LOS}}$ holds, where $\mathbf{g}_{E}^{\mathrm{LOS}}$ is the LoS component of the channel $\mathbf{g}_{E}\in \mathbb{C}^{N\times 1}$ from the BS to the eavesdropper.
According to $\mathbf{g}_{E}^{\mathrm{LOS}}\approx\mathbf{g}_{K}^{\mathrm{LOS}}$ and \eqref{racian}, we have $\mathbf{g}_{E}//\mathbf{g}_{K}$ due to $K_R\gg1$ in LoS environment, where $\mathbf{a}//\mathbf{b}$ means that vectors $\mathbf{a}$ and $\mathbf{b}$ have the same direction.
To ensure that $|\mathbf{g}^H_{E}\mathbf{v}|=0$, the BS needs to design $\mathbf{v}$ such that $|\mathbf{g}^H_{K}\mathbf{v}|=0$.
Therefore, the achievable rates at the user $K$ and the eavesdropper can be approximated to zero.
On the other hand, the achievable rate $R_k$ at user $k\neq K$ is
\begin{align}\label{Rk}
R_k=\frac{1}{2}\mathrm{log}_2\left(1+\frac{|\mathbf{g}^H_{k}\mathbf{v}|^2}{\sigma_k^2}\right),~~\forall k\neq K,
\end{align}
where the factor $1/2$ is due to the two transmission phases.

\subsection{User Cooperation Phase}

Due to $|\mathbf{g}^H_{K}\mathbf{v}|=0$, the user $K$ would hardly receive information during the multicasting phase.
As a result, the other users need to help to forward the information to user $K$.
In particular, the helping users $\{1,\cdots,K-1\}$ transmit the signal $s$ to the user $K$ through a beamforming vector $\mathbf{w}\in\mathbb{C}^{(K-1)\times 1}=[w_1,...,w_{K-1}]^T$ in the user cooperation phase, and the received signal at the user $K$ is given by
$\mathbf{h}_K^H\mathbf{w}s+n_{K}$,
where $\mathbf{h}_K\in \mathbb{C}^{(K-1)\times1}$ is the channel vector\footnote{Channel $\mathbf{h}_K$ can be estimated at the user $K$, who subsequently forwards $\mathbf{h}_K$ to the BS.} from the helping users to the user $K$.
Therefore, the achievable rate $R_K$ at user $K$ is \cite{relay}
\begin{align}\label{RK}
&R_K=\mathrm{min}\left\{R_1,\cdots,R_{K-1},\frac{1}{2}\mathrm{log}_2\left(1+\frac{|\mathbf{h}_K^H\mathbf{w}|^2}{\sigma_K^2}\right)\right\}.
\end{align}
On the other hand, the received signal at the eavesdropper is given by
$\mathbf{h}_E^H\mathbf{w}s+n_{E}$, where $\mathbf{h}_E\in \mathbb{C}^{(K-1)\times1}$ is the channel vector from the helping users to the eavesdropper and $n_{E}\in \mathbb{C}$ is the zero-mean Gaussian noise at the eavesdropper with power $\sigma_E^2$.
Accordingly, the eavesdropping rate is\footnote{In practice, the eavesdropper may not know $\mathbf{w}$. However, the eavesdropper can adjust its receiver to fully exploit the received signal strength $|\mathbf{h}_E^H\mathbf{w}|^2$.}
\begin{align}\label{RE}
&R_E=\mathrm{min}\left\{R_1,\cdots,R_{K-1},\frac{1}{2}\mathrm{log}_2\left(1+\frac{|\mathbf{h}_E^H\mathbf{w}|^2}{\sigma_E^2}\right)\right\}.
\end{align}

Finally, combining the results in \eqref{Rk}--\eqref{RE}, the secrecy rate of the multicast system under the AAUC scheme can be expressed as \cite{secrecyrate1,secrecyrate2,secrecyrate3}:
\begin{align}\label{secrecyrate}
&\left[\mathop{\mathrm{min}}_{k=1,\cdots,K}R_k-R_E\right]^+
=\Bigg[R
-\frac{1}{2}\mathrm{log}_2\left(1+\frac{|\mathbf{h}_E^H\mathbf{w}|^2}{\sigma_E^2}\right)
 \Bigg]^+,
\end{align}
where the equality is due to $R_K\leq R_k$ for any $k\neq K$ according to \eqref{RK}, and
\begin{align}\label{R_expression}
R
&=\frac{1}{2}\mathrm{min}\Bigg\{\mathrm{log}_2\left(1+\frac{|\mathbf{g}^H_{1}\mathbf{v}|^2}{\sigma_1^2}\right),\cdots,
\nonumber\\
&\quad{}
\mathrm{log}_2\left(1+\frac{|\mathbf{g}^H_{K-1}\mathbf{v}|^2}{\sigma_{K-1}^2}\right),\mathrm{log}_2\left(1+\frac{|\mathbf{h}_K^H\mathbf{w}|^2}{\sigma_K^2}\right)\Bigg\}.
\end{align}

\section{Problem Formulation}

\subsection{Average Secrecy Rate}
The major challenge to maximize \eqref{secrecyrate} is that $\mathbf{h}_E$ is unknown.
A traditional way is to model $\mathbf{h}_E$ as a constant value $\widehat{\mathbf{h}}_E$ plus some random perturbations \cite{robust1,passive4}.
However, such a method requires the knowledge of $\widehat{\mathbf{h}}_E$, which is difficult to obtain since the eavesdropper is passive.
To combat the passive attack without knowing the full channel information, we propose to exploit the angular information.
In particular, denoting the location of BS as $(0,0)$, the location of user $k$ can be expressed as $\left(D_k\,\mathrm{cos}\,\theta_k,D_k\,\mathrm{sin}\,\theta_k\right)$, where $D_k\,\mathrm{cos}\,\theta_k$ is the latitude and $D_k\,\mathrm{sin}\,\theta_k$ is the longitude.
Since the eavesdropper hides at the line segment between BS and user $K$,
the location of eavesdropper can be represented as $\left(\rho\,\mathrm{cos}\,\theta_K,\rho\,\mathrm{sin}\,\theta_K\right)$, where $\rho$ is bounded from $0$ to $D_K$.

Based on the locations, the distance from user $k$ to eavesdropper  is given by
\begin{align}\label{dk}
d_{E,k}(\rho)&=\Big[(\rho\,\mathrm{cos}\,\theta_K-D_k\,\mathrm{cos}\,\theta_k)^2
\nonumber\\
&\quad{}
+(\rho\,\mathrm{sin}\,\theta_K-D_k\,\mathrm{sin}\,\theta_k)^2\Big]^{-1/2}.
\end{align}
By further adopting the Rician fading channel model \cite{channel1,channel2}, the channel $h_{E,k}$ (which is the $k^{\mathrm{th}}$ element in $\mathbf{h}_E=[h_{E,1},\cdots,h_{E,K-1}]^T$) from user $k$ to eavesdropper can be modeled as:
\begin{align}\label{channel}
h_{E,k}(\rho,\phi_k,\delta_k)&=\sqrt{\varrho_0\left(\frac{d_{E,k}(\rho)}{d_0}\right)^{-\alpha}}
\nonumber\\
&
\times
\left(\sqrt{\frac{K_R}{1+K_R}}\,\mathrm{e}^{\mathrm{j}\phi_k}
+\sqrt{\frac{1}{1+K_R}}\, \delta_{k}\right),
\end{align}
where $\phi_k$ is the phase\footnote{Since each user has its own local phase shift in the transmitted signal, $\phi_k$ is a random value independent of $\rho$.} of LoS link with $\phi_k\sim\mathcal{U}(-\pi,+\pi)$, and $\delta_{k}$ is the non-LoS component with $\delta_{k}\sim\mathcal{CN}(0,1)$.

By substituting the channel model \eqref{channel} into \eqref{secrecyrate}, the secrecy rate for a fixed $(\rho,\{\phi_k,\delta_k\})$ can be obtained.
Then by averaging the result in \eqref{secrecyrate} over $(\rho,\{\phi_k,\delta_k\})$, the average secrecy rate of the system can be computed as follows \cite{averagesecrecy1,averagesecrecy2, averagesecrecy3}:
\begin{align}\label{Sk}
&S(R,\mathbf{w}|\sigma_E)=\frac{1}{D_K}\int_{0}^{D_K}\mathbb{E}_{\{\phi_k,\delta_k\}}\Bigg\{\Bigg[R
\nonumber\\
&
-\frac{1}{2}\mathrm{log}_2
\left(1+\frac{\Big|\mathbf{h}^H_E(\rho,\{\phi_k,\delta_k\})\mathbf{w}
\Big|^2}{\sigma^2_E}\right)\Bigg]^+\Bigg\}\,\mathrm{d}\rho.
\end{align}

\subsection{Secrecy Rate Maximization}

In the considered system, the design variables that can be controlled are the transmit beamformer $\mathbf{v}$ at BS and the distributed beamformer $\mathbf{w}$ at helping users.
Since the power costs at both BS and helping users should be smaller than the total budget, the beamformers need to satisfy
$\frac{||\mathbf{v}||^2}{2}+\frac{||\mathbf{w}||^2}{2}\leq P_{\rm{max}}$, where the factor $1/2$ is due to the two transmission phases, and $P_{\rm{max}}$ is the total transmit power budget.
Having the transmit power satisfied, it is then crucial to maximize the average secrecy rate $S(R,\mathbf{w}|\sigma_E)$, which leads to the following optimization problem:
\begin{subequations}
\begin{align}
\mathrm{P}1:\mathop{\mathrm{max}}_{\substack{\mathbf{v},\mathbf{w},R}}
~~&\frac{1}{D_K}\int_{0}^{D_K}\mathbb{E}_{\{\phi_k,\delta_k\}}\Bigg\{\Bigg[R-\frac{1}{2}\mathrm{log}_2
\Big(1
\nonumber\\
&
+\frac{\Big|\mathbf{h}^H_E(\rho,\{\phi_k,\delta_k\})\mathbf{w}
\Big|^2}{\sigma^2_E}\Big)\Bigg]^+\Bigg\}\,\mathrm{d}\rho,
\nonumber\\
~~~~~~\mathrm{s. t.}~~~~&\frac{||\mathbf{v}||^2_2}{2}+\frac{||\mathbf{w}||^2_2}{2}\leq P_{\rm{max}}, \label{P1a}
\\
&|\mathbf{g}^H_{K}\mathbf{v}|=0, \label{P1b}
\\
&R\leq\frac{1}{2}\mathrm{log}_2\left(1+\frac{|\mathbf{g}^H_{k}\mathbf{v}|^2}{\sigma_k^2}\right),~~\forall k\neq K,
\label{P1c}
\\
&R\leq\frac{1}{2}\mathrm{log}_2\left(1+\frac{|\mathbf{h}_K^H\mathbf{w}|^2}{\sigma_K^2}\right),
\label{P1d}
\end{align}
\end{subequations}
where the objective function of $\mathrm{P}1$ is obtained from \eqref{Sk}.
The constraint \eqref{P1a} is the operation budget constraint and the constraint \eqref{P1b} guarantees that no information is leaked to the eavesdropper during the multicasting phase.
Finally, the constraints \eqref{P1c} and \eqref{P1d} are the data-rate constraints obtained from \eqref{R_expression}.

Notice that problem $\rm{P}1$ can be extended to the case with different information signals for different users by adding an additional individual transmission phase for the users $\{1,\cdots,K-1\}$.
In addition, the problem $\rm{P}1$ can be extended to the case with separate transmit power constraints at BS and users.
In such a case, the design variables $\mathbf{v}$ and $\mathbf{w}$ are no longer coupled in the constraints, and the proposed algorithms are still applicable to the resultant problem.

\section{Proposed SCO Based AAUC}
\subsection{Angular Secrecy Model}
To solve $\mathrm{P}1$, the first challenge comes from the integration over $\rho$ and the expectation over $\{\phi_k,\delta_k\}$, which make the objective a numerical function of $\mathbf{w}$.
To address this challenge, it is observed from \eqref{Sk} that the secrecy rate $S$ satisfies the following properties:
\begin{itemize}
\item[(i)] When $\sigma_E^2\rightarrow 0$, we must have $S\rightarrow 0$.

\item[(ii)] When $\sigma_E^2\rightarrow +\infty$, the secrecy rate $S$ would reach its maximum value $R$.

\item[(iii)] The secrecy rate $S$ is lower bounded by a logarithm function $\left[R-\frac{1}{2}\mathrm{log}_2
\left(1+\mathbf{w}^H\mathbf{J}\mathbf{w}/\sigma^2_E\right)\right]^+$ (proved in Appendix A), where
$\mathbf{J}=\mathrm{diag}\left(J_1,\cdots,J_{K-1}\right)$ is the angular matrix with
\begin{align}\label{Jk}
J_k=&
\frac{\varrho_0}{D_k}\int_{0}^{D_K}
\Big[(\rho\,\mathrm{cos}\,\theta_K-D_k\,\mathrm{cos}\,\theta_k)^2
\nonumber\\
&
+(\rho\,\mathrm{sin}\,\theta_K-
D_k\,\mathrm{sin}\,\theta_k)^2\Big]^{-\alpha/2}\mathrm{d}\rho.
\end{align}
\end{itemize}
Based on (i)--(iii), the average secrecy rate should have a logarithm curve with respect to noise power $\sigma_E^2$.
To this end, we propose an angular secrecy model as follows:
\begin{align}\label{smodel}
&\widehat{S}\left(R,\mathbf{w}|\sigma_E,\lambda\right)=\left[R-\frac{1}{2}\mathrm{log}_2
\left(1+\frac{\lambda\mathbf{w}^H\mathbf{J}\mathbf{w}}{\sigma^2_E}\right)\right]^+,
\end{align}
where $\lambda\in[0,1]$ is a tuning parameter.
It can be verified that the nonlinear function in \eqref{smodel} satisfies all the properties (i)--(iii).
Moreover, since equation \eqref{Jk} does not involve $\mathbf{w}$ inside the integral, the angular matrix $\mathbf{J}$ can be computed via one-dimensional integration.

\begin{figure}
\centering
\subfigure[]{\includegraphics[width=60mm]{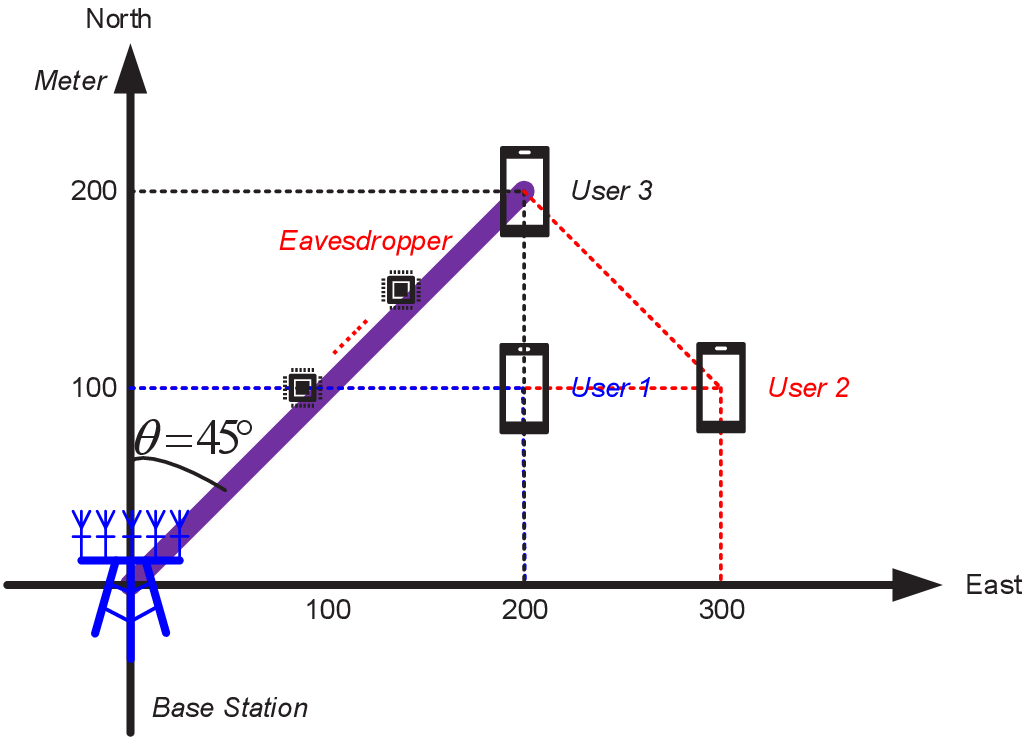}}
\subfigure[]{\includegraphics[width=60mm]{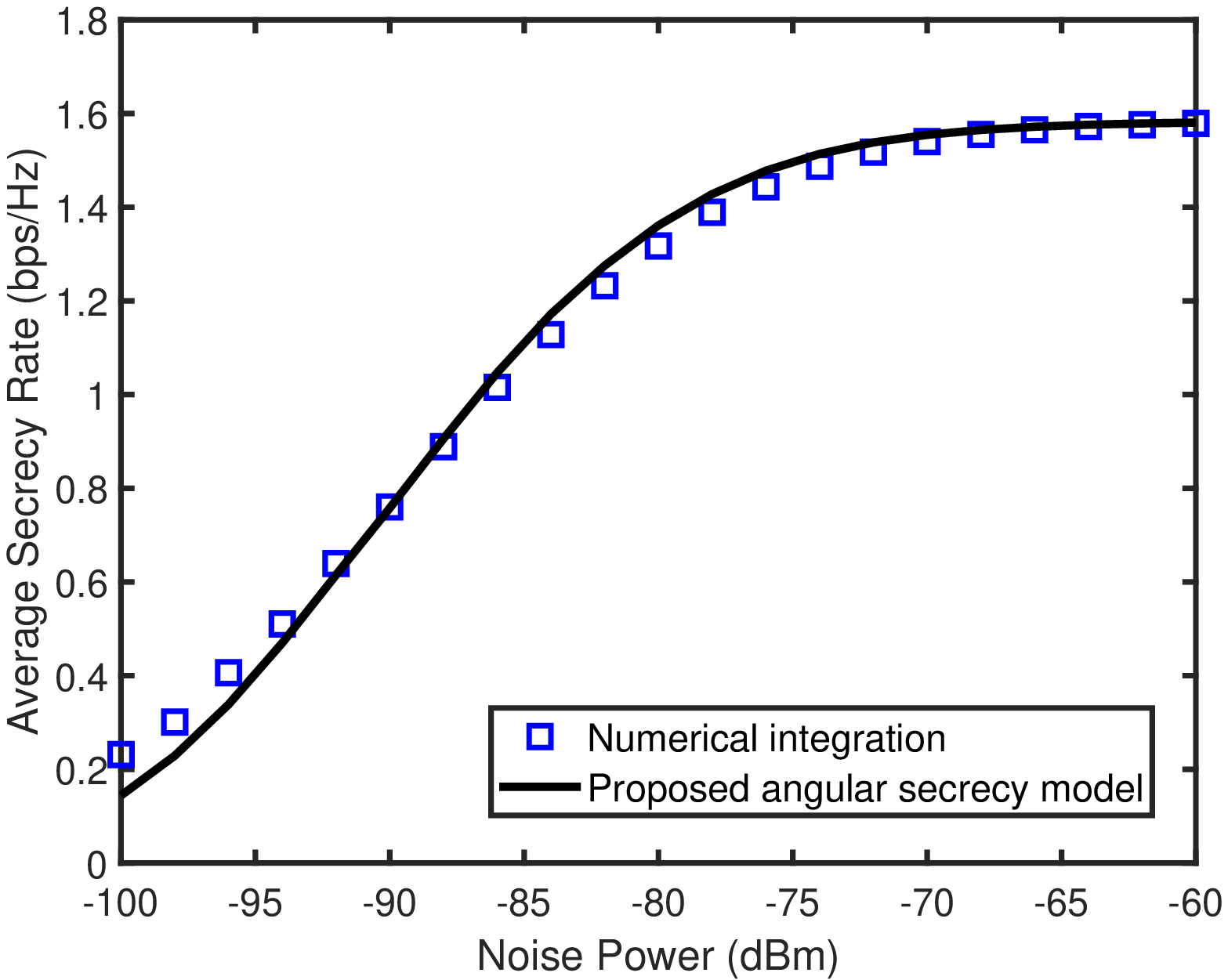}}
\caption{a) Illustrating example with $K=3$, $\alpha=2.5$ and $K_R=30~\mathrm{dB}$; b) Fitting performance of the angular secrecy model.}
\end{figure}

\textbf{Interpretation of the Angular Matrix.} $J_k$ in \eqref{Jk} is the average pathloss from user $k$ to the eavesdropper and quantitatively measures how dangerous for user $k$ to transmit a signal.
For example, in Fig. 2a with $K=3$ and $\alpha=2.5$, we have $J_1=10^{-5}\varrho_0$ and $J_2=2\times 10^{-6}\varrho_0$ according to \eqref{Jk}, meaning that user $2$ can provide much better security.
This corroborates the fact that user $1$ is closer to the line segment between BS and user $3$.

Based on (i)--(iii), $\widehat{S}_{\lambda}\left(R,\mathbf{w}|\sigma_E,\lambda\right)$ would have a similar trend to $S$, and the next step is to minimize the gap between the two curves $\widehat{S}\left(R,\mathbf{w}|\sigma_E,\lambda\right)$ and $S\left(R,\mathbf{w}|\sigma_E\right)$.
Since $\widehat{S}$ is a decreasing function of $\lambda$, one-dimensional search over $[0,1]$ can be adopted to determine an appropriate $\lambda$.
More specifically, we numerically compute $S(R^{(i)},\mathbf{w}^{(i)}|\sigma_E^{(i)})$ using \eqref{Sk} for the input $(R^{(i)},\mathbf{w}^{(i)},\sigma_E^{(i)})$, where $i=1,\cdots,T$ is the index of training sample and $T$ is the sample size.
With the training set $\{(R^{(i)},\mathbf{w}^{(i)},\sigma_E^{(i)})\}_{i=1}^T$, the parameter $\lambda$ in $\widehat{S}$ can be found via nonlinear least squares fitting:
\begin{align}
&\mathop{\mathrm{min}}_{\lambda\in[0,1]}~\frac{1}{T}\mathop{\sum}_{i=1}^T\Big|S(R^{(i)},\mathbf{w}^{(i)}|\sigma_E^{(i)})-\widehat{S}\left(R^{(i)},\mathbf{w}^{(i)}|\sigma_E^{(i)},\lambda\right)\Big|^2. \label{train}
\end{align}

In order to verify the accuracy of the model, we consider the setting of Fig. 2a.
In this environment, we consider $41$ different values of $[\sigma_E^{(i)}]^2$ ranging from $-100\,\mathrm{dBm}$ to $-60\,\mathrm{dBm}$ by a step size of $2\,\mathrm{dBm}$, and $T=21\times 100$ training samples are generated with $R^{(i)}\sim\mathcal{U}(0,3)$ and $\mathbf{w}^{(i)}\sim\mathcal{CN}(\bm{0},\frac{0.01}{K-1}\,\mathbf{I}_{K-1})$ (i.e., the average power of $\mathbf{w}^{(i)}$ is $10\,\mathrm{dBm}$).
It can be seen from Fig. 2b that with the choice of $\lambda=0.64$, the proposed model matches the numerical simulation of $S$ very well under a large range of noise power $\sigma_E^2$.
Moreover, the mean square error of \eqref{train} at $\lambda=0.64$ is $0.0056$, which is negligible compared to the absolute secrecy rate.

\emph{Remark 1:} Notice that the proposed method can be well extended to the 3D scenarios when the eavesdropper is not on the same plane as the users (e.g., the eavesdropper is a flying unmanned aerial vehicle \cite{3d}).
In such a case, the distance from eavesdropper to the user $k$ is
\begin{align}
d_{E,k}(\rho)&=\Big[(\rho\,\mathrm{cos}\,\theta_K-D_k\,\mathrm{cos}\,\theta_k)^2
\nonumber\\
&
+(\rho\,\mathrm{sin}\,\theta_K-D_k\,\mathrm{sin}\,\theta_k)^2+\rho^2\,\mathrm{tan}^2\,\beta\Big]^{-1/2},
\end{align}
where $\beta$ is the elevation angle of the eavesdropper.
Accordingly, the $k^{\mathrm{th}}$ element of the angular matrix becomes
\begin{align}
J_k=&
\frac{\varrho_0}{D_K}\int_{0}^{D_K}
\Big[(\rho\,\mathrm{cos}\,\theta_K-D_k\,\mathrm{cos}\,\theta_k)^2
\nonumber\\
&
+(\rho\,\mathrm{sin}\,\theta_K-
D_k\,\mathrm{sin}\,\theta_k)^2+\rho^2\,\mathrm{tan}^2\,\beta\Big]^{-\alpha/2}\mathrm{d}\rho.
\end{align}
Once the angular matrix is defined, all the derivations in the manuscript stay the same for 2D and 3D scenarios.

\subsection{Proposed SCO Algorithm}

With the angular secrecy model, we can replace $S(R,\mathbf{w}|\sigma_E)$ with $\widehat{S}(R,\mathbf{w}|\sigma_E,\lambda)$ in the objective function of $\mathrm{P}1$, and the resultant problem is given by
\begin{align}
\mathop{\mathrm{max}}_{\substack{\mathbf{v},\mathbf{w},R}}
~&\widehat{S}\left(R,\mathbf{w}|\sigma_E,\lambda\right)
\quad \mathrm{s. t.}\quad\eqref{P1a}-\eqref{P1d}. \label{Plb}
\end{align}
To solve \eqref{Plb}, the following transformations are used to eliminate the constraints.
\begin{itemize}
  \item QR decomposition is applied to $\mathbf{g}_K$, and the unitary matrix $\mathbf{U}\in\mathbb{C}^{N \times (N-1)}$, which spans the null space of $\mathbf{g}_K$, is obtained. With $\mathbf{U}$, we can set $\mathbf{v}=\mathbf{U}\mathbf{z}$, where $\mathbf{z}\in\mathbb{C}^{(N-1) \times 1}$ is a newly introduced variable.
  \item Putting $\mathbf{v}=\mathbf{U}\mathbf{z}$ into \eqref{P1a}, this constraint becomes $||\mathbf{z}||^2_2+||\mathbf{w}||^2_2\leq 2P_{\rm{max}}$, where we have used $\mathbf{U}^H\mathbf{U}=\mathbf{I}$.
  \item Putting $\mathbf{v}=\mathbf{U}\mathbf{z}$ into \eqref{P1b}, we have $|\mathbf{g}^H_{K}\mathbf{U}\mathbf{z}|=0$. This constraint is always feasible due to $\mathbf{g}^H_{K}\mathbf{U}=\mathbf{0}^T$.
  \item Putting \eqref{P1c} and \eqref{P1d} into the objective function of \eqref{Plb}, the inequality constraints \eqref{P1c}--\eqref{P1d} can be removed.
\end{itemize}
Based on the above procedure, problem \eqref{Plb} is equivalently reformulated into
\begin{align}
\mathrm{P}2:\mathop{\mathrm{max}}_{\substack{\mathbf{z},\mathbf{w}}}
~&
\mathrm{min}\left[
\mathop{\mathrm{min}}_{k\neq K}~\Phi_k(\mathbf{z},\mathbf{w}),
\Upsilon(\mathbf{w})
\right],
\nonumber\\
~~~\mathrm{s. t.}~~&||\mathbf{z}||^2_2+||\mathbf{w}||^2_2\leq 2P_{\rm{max}}, \label{P2a}
\end{align}
where
\begin{align}
\Phi_k(\mathbf{z},\mathbf{w})&=
\frac{1}{2}\mathrm{log}_2\left(1+\frac{|\mathbf{g}_k^H\mathbf{U}\mathbf{z}|^2}{\sigma_k^2}\right)
\nonumber\\
&\quad{}
-
\frac{1}{2}\mathrm{log}_2
\left(1+\frac{\lambda\mathbf{w}^H\mathbf{J}\mathbf{w}}{\sigma^2_E}\right), \label{Phi}
\\
\Upsilon(\mathbf{w})&=
\frac{1}{2}\mathrm{log}_2\left(1+\frac{|\mathbf{h}_K^H\mathbf{w}|^2}{\sigma_K^2}\right)
\nonumber\\
&\quad{}
-\frac{1}{2}\mathrm{log}_2
\left(1+\frac{\lambda\mathbf{w}^H\mathbf{J}\mathbf{w}}{\sigma^2_E}\right). \label{Upsilon}
\end{align}

Due to the non-concave functions $\Phi_k$ and $\Upsilon$, problem $\mathrm{P}2$ is nonconvex.
To address this nonconvexity, we will propose a successive convex optimization (SCO) algorithm \cite{sco1,sco2,sco3,sco4,sco5,sco6}, which constructs a sequence of lower bounds on $\{\Phi_k,\Upsilon\}$ and solves the surrogate problems.
There are many techniques to construct the surrogate functions, e.g., convexity inequality, second-order Taylor expansion, arithmetic-geometric mean inequality, and Cauchy-Schwartz inequality \cite{sco1}.
In this paper, the surrogate functions are found by first-order Taylor expansion.

More specifically, given any feasible solution $(\mathbf{z}^\star,\mathbf{w}^\star)$ to $\mathrm{P}2$, we define surrogate functions
\begin{align}
&\widetilde{\Phi}_k(\mathbf{z},\mathbf{w}|\mathbf{z}^\star,\mathbf{w}^\star)
\nonumber\\
&=
\frac{1}{2}\mathrm{log}_2\left(1+2\mathrm{Re}\left[\frac{(\mathbf{z}^\star)^H\mathbf{U}^H\mathbf{g}_{k}\mathbf{g}^H_{k}\mathbf{U}\mathbf{z}}{\sigma_k^2}\right]
-\frac{|\mathbf{g}^H_{k}\mathbf{U}\mathbf{z}^\star|^2}{\sigma_k^2}\right)
\nonumber\\
&~~~-\frac{\lambda\mathbf{w}^H\mathbf{J}\mathbf{w}-\lambda\left(\mathbf{w}^\star\right)^H\mathbf{J}\mathbf{w}^\star}
{2\mathrm{ln}2\left[\sigma_E^2+\lambda\left(\mathbf{w}^\star\right)^H\mathbf{J}\mathbf{w}^\star\right]}
\nonumber\\
&~~~
-\frac{1}{2}\mathrm{log}_2
\left(1+\frac{\lambda\left(\mathbf{w}^\star\right)^H\mathbf{J}\mathbf{w}^\star}{\sigma^2_E}\right), \label{Phiapp}
\\
&\widetilde{\Upsilon}(\mathbf{w}|\mathbf{w}^\star)
\nonumber\\
&=
\frac{1}{2}
\mathrm{log}_2\left(1+2\mathrm{Re}\left[\frac{(\mathbf{w}^\star)^H\mathbf{h}_K\mathbf{h}_K^H\mathbf{w}}{\sigma_K^2}\right]
-\frac{|\mathbf{h}_K^H\mathbf{w}^\star|^2}{\sigma_K^2}
\right)
\nonumber\\
&~~~-\frac{\lambda\mathbf{w}^H\mathbf{J}\mathbf{w}-\lambda\left(\mathbf{w}^\star\right)^H\mathbf{J}\mathbf{w}^\star}
{2\mathrm{ln}2\left[\sigma_E^2+\lambda\left(\mathbf{w}^\star\right)^H\mathbf{J}\mathbf{w}^\star\right]}
\nonumber\\
&~~~
-\frac{1}{2}\mathrm{log}_2
\left(1+\frac{\lambda\left(\mathbf{w}^\star\right)^H\mathbf{J}\mathbf{w}^\star}{\sigma^2_E}\right), \label{Upsilonapp}
\end{align}
and the following proposition can be established.
\begin{proposition}
The functions $\{\widetilde{\Phi}_k,\widetilde{\Upsilon}\}$ satisfy the following properties: (i) $\widetilde{\Phi}_k(\mathbf{z},\mathbf{w}|\mathbf{z}^\star,\mathbf{w}^\star)\leq \Phi_k(\mathbf{z},\mathbf{w})$ and $\widetilde{\Upsilon}(\mathbf{w}|\mathbf{w}^\star)\leq \Upsilon(\mathbf{w})$; (ii) $\widetilde{\Phi}_k(\mathbf{z}^\star,\mathbf{w}^\star|\mathbf{z}^\star,\mathbf{w}^\star)=\Phi_k(\mathbf{z}^\star,\mathbf{w}^\star)$ and
$\widetilde{\Upsilon}(\mathbf{w}^\star|\mathbf{w}^\star)=\Upsilon(\mathbf{w}^\star)$; (iii) With complex gradient operator $\nabla_{\mathbf{x}}:=\partial /\partial\, \mathrm{conj}(\mathbf{x})$,
\begin{align}
\nabla_{\mathbf{z}}\widetilde{\Phi}_k(\mathbf{z},\mathbf{w}|\mathbf{z}^\star,\mathbf{w}^\star)
\Big|_{\mathbf{z}=\mathbf{z}^\star,\mathbf{w}=\mathbf{w}^\star}
=&\nabla_{\mathbf{z}}\Phi_k(\mathbf{z},\mathbf{w})
\Big|_{\mathbf{z}=\mathbf{z}^\star,\mathbf{w}=\mathbf{w}^\star},
\nonumber
\\
\nabla_{\mathbf{w}}\widetilde{\Phi}_k(\mathbf{z},\mathbf{w}|\mathbf{z}^\star,\mathbf{w}^\star)
\Big|_{\mathbf{z}=\mathbf{z}^\star,\mathbf{w}=\mathbf{w}^\star}
=&\nabla_{\mathbf{w}}\Phi_k(\mathbf{z},\mathbf{w})
\Big|_{\mathbf{z}=\mathbf{z}^\star,\mathbf{w}=\mathbf{w}^\star},
\nonumber\\
\nabla_{\mathbf{w}}\widetilde{\Upsilon}(\mathbf{w}|\mathbf{w}^\star)
\Big|_{\mathbf{w}=\mathbf{w}^\star}
=&\nabla_{\mathbf{w}}\Upsilon(\mathbf{w})
\Big|_{\mathbf{w}=\mathbf{w}^\star}.
\nonumber
\end{align}
\end{proposition}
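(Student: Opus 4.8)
The plan is to prove the three properties one at a time, exploiting the fact that both $\Phi_k$ and $\Upsilon$ are differences of two logarithmic terms and that each surrogate is assembled by handling these two terms separately: the positive (desired-signal) log is minorized by linearizing its quadratic argument, while the negative (leakage) log is minorized by linearizing the concave function appearing inside it.

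First, for property (i), I would split the gap $\Phi_k-\widetilde{\Phi}_k$ into a contribution from the positive log and a contribution from the negative log and show each is nonnegative. For the positive term, the key observation is that $|\mathbf{g}_k^H\mathbf{U}\mathbf{z}|^2=\mathbf{z}^H\mathbf{M}\mathbf{z}$, with $\mathbf{M}=\mathbf{U}^H\mathbf{g}_k\mathbf{g}_k^H\mathbf{U}$ a Hermitian positive semidefinite rank-one matrix, is a convex quadratic; hence for any reference $\mathbf{z}^\star$ one has $\mathbf{z}^H\mathbf{M}\mathbf{z}-\big(2\mathrm{Re}[(\mathbf{z}^\star)^H\mathbf{M}\mathbf{z}]-(\mathbf{z}^\star)^H\mathbf{M}\mathbf{z}^\star\big)=(\mathbf{z}-\mathbf{z}^\star)^H\mathbf{M}(\mathbf{z}-\mathbf{z}^\star)\geq 0$, and the expression in parentheses is exactly the linearized numerator inside the log of $\widetilde{\Phi}_k$; monotonicity of $\log_2(1+\cdot/\sigma_k^2)$ then yields the inequality on the first term. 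For the negative term, I would use that $t\mapsto\log_2(1+\lambda t/\sigma_E^2)$ is concave for $t\geq 0$, so its first-order Taylor expansion about $t^\star=(\mathbf{w}^\star)^H\mathbf{J}\mathbf{w}^\star$ is a global overestimate; negating this tangent line reproduces precisely the two $\mathbf{w}$-dependent terms in $\widetilde{\Phi}_k$ and therefore lower-bounds the leakage term. Adding the two bounds gives $\widetilde{\Phi}_k\leq\Phi_k$, and the argument for $\widetilde{\Upsilon}\leq\Upsilon$ is identical with $\mathbf{h}_K^H\mathbf{w}$ in place of $\mathbf{g}_k^H\mathbf{U}\mathbf{z}$.

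Second, for property (ii), I would simply substitute $\mathbf{z}=\mathbf{z}^\star$ and $\mathbf{w}=\mathbf{w}^\star$ and check that every linearization becomes exact: the linearized numerator collapses to $2\mathrm{Re}[(\mathbf{z}^\star)^H\mathbf{M}\mathbf{z}^\star]-(\mathbf{z}^\star)^H\mathbf{M}\mathbf{z}^\star=|\mathbf{g}_k^H\mathbf{U}\mathbf{z}^\star|^2$, while the tangent-line correction $\lambda(\mathbf{w}^H\mathbf{J}\mathbf{w}-(\mathbf{w}^\star)^H\mathbf{J}\mathbf{w}^\star)$ vanishes, so $\widetilde{\Phi}_k$ reduces term-by-term to $\Phi_k$ at $(\mathbf{z}^\star,\mathbf{w}^\star)$; the same substitution gives $\widetilde{\Upsilon}(\mathbf{w}^\star|\mathbf{w}^\star)=\Upsilon(\mathbf{w}^\star)$. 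Third, for property (iii), rather than differentiating the surrogates explicitly, I would invoke a tangency (envelope) argument: properties (i) and (ii) together say that the real-valued gap $\Phi_k-\widetilde{\Phi}_k$ is nonnegative everywhere and equals zero at the interior point $(\mathbf{z}^\star,\mathbf{w}^\star)$, so that point is an unconstrained global minimizer of the gap; since both functions are smooth in a neighborhood of the reference point, first-order stationarity forces the Wirtinger gradient of the gap to vanish there, i.e. the gradients of $\widetilde{\Phi}_k$ and $\Phi_k$ coincide at $(\mathbf{z}^\star,\mathbf{w}^\star)$, and likewise for $\widetilde{\Upsilon}$ and $\Upsilon$.

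I expect the main obstacle to be bookkeeping rather than conceptual: one must carry the Wirtinger convention $\nabla_{\mathbf{x}}=\partial/\partial\,\mathrm{conj}(\mathbf{x})$ consistently and confirm that the ``global minorant touching at an interior point'' reasoning is legitimate in the complex setting, which it is once the functions are regarded as real-valued functions of the real and imaginary parts of $(\mathbf{z},\mathbf{w})$. A minor point worth flagging is that the surrogate must be differentiable at the reference point, which requires the linearized argument inside the log of $\widetilde{\Phi}_k$ to stay positive there; this holds because at $\mathbf{z}=\mathbf{z}^\star$ it equals $1+|\mathbf{g}_k^H\mathbf{U}\mathbf{z}^\star|^2/\sigma_k^2>0$.
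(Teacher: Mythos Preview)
Your proof is correct. Parts (i) and (ii) proceed exactly as in the paper: the positive-log term is bounded via the PSD inequality $(\mathbf{z}-\mathbf{z}^\star)^H\mathbf{U}^H\mathbf{g}_k\mathbf{g}_k^H\mathbf{U}(\mathbf{z}-\mathbf{z}^\star)\geq 0$ together with monotonicity of $\log_2$, and the negative-log term is bounded by the tangent-line overestimate of the concave map $t\mapsto\log_2(1+\lambda t/\sigma_E^2)$; part (ii) is direct substitution in both.

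The genuine difference is in part (iii). The paper proves gradient matching by brute force: it writes out all six Wirtinger gradients explicitly and then substitutes $(\mathbf{z},\mathbf{w})=(\mathbf{z}^\star,\mathbf{w}^\star)$ to verify pairwise equality. You instead invoke the tangency principle---since $\Phi_k-\widetilde{\Phi}_k\geq 0$ with equality at the reference point, that point is an interior minimizer of a smooth real-valued function, forcing its (Wirtinger) gradient to vanish there. Your route is shorter and conceptually cleaner, and it is the standard way this property is established in the majorization--minimization literature; the paper's explicit computation has the minor practical advantage of displaying the gradient formulas, which can be reused elsewhere, but is otherwise just bookkeeping. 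Your remarks on handling the complex convention by passing to real and imaginary parts, and on positivity of the log argument at the reference point, correctly dispose of the only subtleties.
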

\begin{proof}
See Appendix B.
\end{proof}

With the above observation, a lower bound can be directly obtained if we replace the functions $\{\Phi_k,\Upsilon\}$ by $\{\widetilde{\Phi}_k,\widetilde{\Upsilon}\}$ expanded around a feasible point.
However, a tighter lower bound can be achieved if we treat the obtained solution as another feasible point and continue to construct the next-round surrogate function.
In particular, assuming that the solution at the $n^{\mathrm{th}}$ iteration is given by $(\mathbf{z}^{[n]},\mathbf{w}^{[n]})$, the proposed SCO executes the following two steps at the $(n+1)^{\mathrm{th}}$ iteration:
\begin{itemize}
  \item Use CVX Mosek \cite{opt2}, a Matlab software package for convex optimization, to solve
  \begin{align}
\mathrm{P}2[n+1]:\mathop{\mathrm{max}}_{\substack{\mathbf{z},\mathbf{w}}}
~&
\mathrm{min}\Big[
\mathop{\mathrm{min}}_{k\neq K}~\widetilde{\Phi}_k(\mathbf{z},\mathbf{w}|\mathbf{z}^{[n]},\mathbf{w}^{[n]}),
\nonumber\\
&~~~~~~
\widetilde{\Upsilon}(\mathbf{w}|\mathbf{w}^{[n]})
\Big],
\nonumber\\
~~~\mathrm{s. t.}~~&||\mathbf{z}||^2_2+||\mathbf{w}||^2_2\leq 2P_{\rm{max}}. \label{P2n}
\end{align}
  \item Denoting the optimal solution to $\mathrm{P}2[n+1]$ as
$(\mathbf{z}^*,\mathbf{w}^*)$, we set $(\mathbf{z}^{[n+1]},\mathbf{w}^{[n+1]})=(\mathbf{z}^*,\mathbf{w}^*)$, and the process repeats with solving the problem $\mathrm{P}2[n+2]$.
\end{itemize}

According to \textbf{Proposition 1} and \cite[Theorem 1]{sco1}, the sequence $\{(\mathbf{z}^{[0]},\mathbf{w}^{[0]}),(\mathbf{z}^{[1]},\mathbf{w}^{[1]}),\cdots\}$ converges to the KKT solution to $\mathrm{P}2$ for any feasible starting point $(\mathbf{z}^{[0]},\mathbf{w}^{[0]})$.
In terms of computational complexity, $\mathrm{P}2[n+1]$ involves $N+K-2$ variables, and the complexity for solving $\mathrm{P}2[n+1]$ can be computed to be
$\mathcal{O}[(N+K-2)^{3.5}]$ \cite{opt1}.
As a consequence, the total complexity for solving $\mathrm{P}2$ is $\mathcal{O}[\mathcal{I}(N+K-2)^{3.5}]$, where $\mathcal{I}$ is the number of iterations needed for the SCO based AUCC algorithm to converge.

\section{Large-Scale Optimization Algorithms}

While a KKT solution to $\mathrm{P}2$ has been obtained in Section IV, the corresponding SCO algorithm requires a cubic complexity with respect to $N$ and $K$.
This leads to extremely time-consuming computations in the case of large $N$ or $K$. 
Therefore, efficient large-scale optimization algorithms become indispensable.
Below, we will sequentially consider the cases of large $N$, large $K$ and large $(N,K)$.

\subsection{Large Number of Antennas}

When $N$ is very large such that $N\gg K$, the channels from BS to users would be approximately orthogonal and we have $|\mathbf{g}_j^H\mathbf{g}_k|^2/||\mathbf{g}_k||_2^2\rightarrow 0$ for any $j\neq k$ \cite{massive2}.
Based on such orthogonal feature, the following proposition can be established.
\begin{proposition}
The asymptotic optimal $\mathbf{z}^*$ and $\mathbf{w}^*$ to $\mathrm{P}2$ when $N\rightarrow+\infty$ satisfy
\begin{align}\label{z*}
&\mathbf{z}^*=\mathop{\sum}_{k=1}^{K-1}\sqrt{\frac{\left(2P_{\rm{max}}-||\mathbf{w}^*||_2^2\right) \sigma_k^2}{\mathop{\sum}_{i=1}^{K-1}\sigma_i^2/||\mathbf{U}^H\mathbf{g}_i||^2_2}}\,
\frac{\mathbf{U}^H\mathbf{g}_k}{||\mathbf{U}^H\mathbf{g}_k||_2^2}.
\end{align}
\end{proposition}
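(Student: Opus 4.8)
The plan is to exploit the fact that $\mathbf{z}$ is almost entirely decoupled from $\mathbf{w}$ in $\mathrm{P}2$, and then to solve the resulting pure max-min beamforming subproblem in closed form using the asymptotic orthogonality of the channels. First I would fix $\mathbf{w}$ at its optimal value $\mathbf{w}^*$ and observe that $\mathbf{z}$ enters the objective of $\mathrm{P}2$ only through the signal terms $|\mathbf{g}_k^H\mathbf{U}\mathbf{z}|^2/\sigma_k^2$ in $\Phi_k$; the eavesdropper penalty in \eqref{Phi} and the entire function $\Upsilon$ in \eqref{Upsilon} are independent of $\mathbf{z}$. Since each $\Phi_k$ is increasing in its signal term while $\Upsilon$ is unaffected, an optimal $\mathbf{z}^*$ must maximize $\min_{k\neq K}|\mathbf{g}_k^H\mathbf{U}\mathbf{z}|^2/\sigma_k^2$ subject to the residual power budget $\|\mathbf{z}\|_2^2\leq 2P_{\mathrm{max}}-\|\mathbf{w}^*\|_2^2$, and this budget may be taken to hold with equality at optimality since scaling up $\mathbf{z}$ never decreases any signal term.

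Next I would shrink the feasible set for $\mathbf{z}$. Writing $\mathbf{a}_k:=\mathbf{U}^H\mathbf{g}_k$, any component of $\mathbf{z}$ orthogonal to $\mathrm{span}\{\mathbf{a}_k\}_{k\neq K}$ consumes power without contributing to any $|\mathbf{a}_k^H\mathbf{z}|^2$, so an optimal $\mathbf{z}^*$ lies in this span. The crux is the asymptotic orthogonality: since $|\mathbf{g}_j^H\mathbf{g}_k|^2/(\|\mathbf{g}_j\|_2^2\|\mathbf{g}_k\|_2^2)\to 0$ for $j\neq k$ and $\mathbf{U}\mathbf{U}^H=\mathbf{I}_N-\mathbf{g}_K\mathbf{g}_K^H/\|\mathbf{g}_K\|_2^2$, one checks that $\mathbf{a}_j^H\mathbf{a}_k=\mathbf{g}_j^H\mathbf{g}_k-(\mathbf{g}_j^H\mathbf{g}_K)(\mathbf{g}_K^H\mathbf{g}_k)/\|\mathbf{g}_K\|_2^2$ becomes negligible relative to $\|\mathbf{a}_j\|_2\|\mathbf{a}_k\|_2$ as $N\to\infty$, so the vectors $\{\mathbf{a}_k\}_{k\neq K}$ are mutually orthogonal in the limit.

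With orthogonal $\{\mathbf{a}_k\}$ I would expand $\mathbf{z}=\sum_{k\neq K}c_k\,\mathbf{a}_k/\|\mathbf{a}_k\|_2$, which gives $|\mathbf{a}_k^H\mathbf{z}|^2=|c_k|^2\|\mathbf{a}_k\|_2^2$ and $\|\mathbf{z}\|_2^2=\sum_{k\neq K}|c_k|^2$. The subproblem then reduces to the classical SNR-balancing problem $\max\,\min_{k\neq K}|c_k|^2\|\mathbf{a}_k\|_2^2/\sigma_k^2$ under $\sum_{k\neq K}|c_k|^2\leq 2P_{\mathrm{max}}-\|\mathbf{w}^*\|_2^2$. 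At the optimum all per-user SNRs equal a common value $t$, for otherwise a small power reallocation from the highest-SNR user to the lowest-SNR user would strictly increase the minimum; this yields $|c_k|^2=t\,\sigma_k^2/\|\mathbf{a}_k\|_2^2$. Imposing the power constraint with equality fixes $t=(2P_{\mathrm{max}}-\|\mathbf{w}^*\|_2^2)/\sum_{i\neq K}(\sigma_i^2/\|\mathbf{a}_i\|_2^2)$, and back-substituting $c_k\,\mathbf{a}_k/\|\mathbf{a}_k\|_2=\sqrt{t\,\sigma_k^2}\,\mathbf{a}_k/\|\mathbf{a}_k\|_2^2$ reproduces exactly the claimed expression \eqref{z*}.

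The main obstacle I anticipate is the orthogonality step: one must verify not only that the raw channels $\mathbf{g}_k$ decorrelate, but that this property survives the projection onto the null space of $\mathbf{g}_K$ carried by $\mathbf{U}$, so that the off-diagonal terms $\mathbf{a}_j^H\mathbf{a}_k$ vanish fast enough for the max-min problem to separate into per-user contributions. Once orthogonality is in hand, the SNR-balancing argument and the resulting closed-form power allocation are routine.
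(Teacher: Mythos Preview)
Your proposal is correct and follows essentially the same route as the paper's proof: fix $\mathbf{w}=\mathbf{w}^*$ to isolate a max--min beamforming subproblem in $\mathbf{z}$, restrict to $\mathrm{span}\{\mathbf{U}^H\mathbf{g}_k\}_{k\neq K}$, invoke asymptotic orthogonality to decouple the per-user SNRs, and finish with the standard SNR-balancing argument. Your treatment of the orthogonality step is in fact slightly more explicit than the paper's, since you use the exact projector formula $\mathbf{U}\mathbf{U}^H=\mathbf{I}_N-\mathbf{g}_K\mathbf{g}_K^H/\|\mathbf{g}_K\|_2^2$ rather than the looser heuristic $\mathbf{U}\mathbf{U}^H\to\mathbf{I}$ used in Appendix~C.
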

\begin{proof}
See Appendix C.
\end{proof}
Using the result from \textbf{Proposition 2}, we can eliminate the vector $\mathbf{z}$ in $\mathrm{P}2$ by setting $\mathbf{z}=\mathbf{z}^*$, which would not change the optimal objective value of $\mathrm{P}2$.
After the above procedure, $\Phi_k$ in the objective function becomes
\begin{align}
\Phi_1(\mathbf{z}^*, \mathbf{w})=\cdots=&\Phi_{K-1}(\mathbf{z}^*,\mathbf{w})
=
\Psi(\mathbf{w}), \nonumber
\end{align}
where
\begin{align}
\Psi(\mathbf{w})=&\frac{1}{2}\mathrm{log}_2\left(1+\frac{2P_{\rm{max}}-||\mathbf{w}||_2^2}{\mathop{\sum}_{i=1}^{K-1}\sigma_i^2/||\mathbf{U}^H\mathbf{g}_i||^2_2}\right)
\nonumber\\
&
-\frac{1}{2}\mathrm{log}_2
\left(1+\frac{\lambda\mathbf{w}^H\mathbf{J}\mathbf{w}}{\sigma^2_E}\right).
\label{Psi}
\end{align}
Moreover, since the constraint \eqref{P2a} is always satisfied by $\mathbf{z}$ in \eqref{z*}, it can be removed accordingly.

Based on the above transformation, problem $\mathrm{P}2$ under $N\rightarrow+\infty$ is equivalent to
\begin{align}
\mathrm{P}3:\mathop{\mathrm{max}}_{\substack{\mathbf{w}}}
~&
\mathrm{min}\left[\Psi(\mathbf{w}),\Upsilon(\mathbf{w})
\right].
\label{P3}
\end{align}
Now, it can be seen that problem $\mathrm{P}3$ only involves one variable $\mathbf{w}$, and is of much lower dimension than $\mathrm{P}2$.
Although $\mathrm{P}3$ is a nonconvex problem, the following proposition is established, which derives the closed-form solution of $\mathbf{w}$.

\begin{proposition}
The optimal $\mathbf{w}^*$ to $\mathrm{P}3$ is
\begin{align}\label{w*}
&\mathbf{w}^*=\frac{\left(\mathbf{\Xi}+\lambda\mathbf{J}/\sigma_E^2\right)^{-1/2}\mathbf{q}}
{\sqrt{\mathbf{q}^H\left(\mathbf{\Xi}+\lambda\mathbf{J}/\sigma_E^2\right)^{-1/2}\mathbf{\Xi}\left(\mathbf{\Xi}+\lambda\mathbf{J}/\sigma_E^2\right)^{-1/2}\mathbf{q}}},
\end{align}
where
\begin{align}\label{Xi}
&\mathbf{\Xi}=\frac{1}{2P_{\rm{max}}}\left[\frac{\mathbf{h}_K\mathbf{h}_K^H}
{\sigma_K^2}\left(\mathop{\sum}_{i=1}^{K-1}\frac{\sigma_i^2}{||\mathbf{U}^H\mathbf{g}_i||^2_2}\right) +\mathbf{I}\right],
\end{align}
and $\mathbf{q}$ is the dominant eigenvector of
\begin{align}
\left(\mathbf{\Xi}+\lambda\mathbf{J}/\sigma_E^2\right)^{-1/2}\left(\mathbf{\Xi}+\mathbf{h}_K\mathbf{h}_K^H/\sigma_K^2\right)
\left(\mathbf{\Xi}+\lambda\mathbf{J}/\sigma_E^2\right)^{-1/2}. \nonumber
\end{align}
\end{proposition}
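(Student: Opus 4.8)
The plan is to exploit the fact that the two functions $\Psi$ and $\Upsilon$ in \eqref{Psi} and \eqref{Upsilon} subtract an \emph{identical} leakage term $\frac{1}{2}\mathrm{log}_2(1+\lambda\mathbf{w}^H\mathbf{J}\mathbf{w}/\sigma_E^2)$. Writing $c=\sum_{i=1}^{K-1}\sigma_i^2/\|\mathbf{U}^H\mathbf{g}_i\|_2^2$, $f_1(\mathbf{w})=(2P_{\rm max}-\|\mathbf{w}\|_2^2)/c$ and $f_2(\mathbf{w})=|\mathbf{h}_K^H\mathbf{w}|^2/\sigma_K^2$, the monotonicity of the logarithm lets me rewrite the objective of $\mathrm{P}3$ as the maximization of the ratio $\Lambda(\mathbf{w})=\frac{1+\mathrm{min}[f_1(\mathbf{w}),f_2(\mathbf{w})]}{1+\lambda\mathbf{w}^H\mathbf{J}\mathbf{w}/\sigma_E^2}$. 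First I would note that $f_1$ is strictly decreasing in $\|\mathbf{w}\|_2^2$ (more relay power leaves less for the multicast beamformer $\mathbf{z}^*$), whereas both $f_2$ and the leakage $\lambda\mathbf{w}^H\mathbf{J}\mathbf{w}/\sigma_E^2$ increase with $\|\mathbf{w}\|_2^2$.

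Next I want to show that the optimal $\mathbf{w}^*$ \emph{balances} the two component rates, i.e. $f_1(\mathbf{w}^*)=f_2(\mathbf{w}^*)$, equivalently $\Psi(\mathbf{w}^*)=\Upsilon(\mathbf{w}^*)$. The cleanest route is a ray analysis: fix a unit direction $\mathbf{u}$ and study $\Lambda(t\mathbf{u})$ as a function of $t=\|\mathbf{w}\|_2$. With $a=|\mathbf{h}_K^H\mathbf{u}|^2/\sigma_K^2$ and $b=\lambda\mathbf{u}^H\mathbf{J}\mathbf{u}/\sigma_E^2$, the inner minimum $\mathrm{min}[f_1,f_2]$ rises and then falls, peaking at the crossing point $t_0$ where $f_1=f_2$. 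For $t>t_0$ the numerator decreases while the denominator increases, so $\Lambda$ is strictly decreasing; for $t<t_0$ one has $\Lambda=\frac{1+t^2a}{1+t^2b}$, whose derivative in $t^2$ has the sign of $a-b$. Hence along any direction with $a>b$ the optimum sits exactly at $t_0$ (balance), and directions with $a\le b$ yield only the trivial value $\Lambda=1$ (zero secrecy, attained as $t\to0$). This establishes that the nontrivial optimum lies on the balance set, and that $t_0^2=2P_{\rm max}/(1+ca)<2P_{\rm max}$ so the implicit power budget is inactive.

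On the balance set the condition $f_1=f_2$ is precisely $\mathbf{w}^H\mathbf{\Xi}\mathbf{w}=1$ for $\mathbf{\Xi}$ in \eqref{Xi}, since it amounts to $c f_2+\|\mathbf{w}\|_2^2=2P_{\rm max}$. Under this normalization I can rewrite $1+\mathrm{min}[f_1,f_2]=1+f_2=\mathbf{w}^H(\mathbf{\Xi}+\mathbf{h}_K\mathbf{h}_K^H/\sigma_K^2)\mathbf{w}$ and $1+\lambda\mathbf{w}^H\mathbf{J}\mathbf{w}/\sigma_E^2=\mathbf{w}^H(\mathbf{\Xi}+\lambda\mathbf{J}/\sigma_E^2)\mathbf{w}$, so $\Lambda(\mathbf{w})=\frac{\mathbf{w}^H(\mathbf{\Xi}+\mathbf{h}_K\mathbf{h}_K^H/\sigma_K^2)\mathbf{w}}{\mathbf{w}^H(\mathbf{\Xi}+\lambda\mathbf{J}/\sigma_E^2)\mathbf{w}}$. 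Being a generalized Rayleigh quotient, this is scale invariant, so I would drop the normalization, maximize the quotient over all $\mathbf{w}\neq\mathbf{0}$, and then restore the scale through $\mathbf{w}^H\mathbf{\Xi}\mathbf{w}=1$. Both matrices are positive definite (the identity inside $\mathbf{\Xi}$ guarantees this), so the substitution $\mathbf{q}=(\mathbf{\Xi}+\lambda\mathbf{J}/\sigma_E^2)^{1/2}\mathbf{w}$ converts the quotient into the ordinary Rayleigh quotient of $(\mathbf{\Xi}+\lambda\mathbf{J}/\sigma_E^2)^{-1/2}(\mathbf{\Xi}+\mathbf{h}_K\mathbf{h}_K^H/\sigma_K^2)(\mathbf{\Xi}+\lambda\mathbf{J}/\sigma_E^2)^{-1/2}$, whose maximizer is its dominant eigenvector $\mathbf{q}$. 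Back-substituting $\mathbf{w}\propto(\mathbf{\Xi}+\lambda\mathbf{J}/\sigma_E^2)^{-1/2}\mathbf{q}$ and choosing the scale so that $\mathbf{w}^H\mathbf{\Xi}\mathbf{w}=1$ gives exactly the denominator $\sqrt{\mathbf{q}^H(\mathbf{\Xi}+\lambda\mathbf{J}/\sigma_E^2)^{-1/2}\mathbf{\Xi}(\mathbf{\Xi}+\lambda\mathbf{J}/\sigma_E^2)^{-1/2}\mathbf{q}}$ and reproduces \eqref{w*}.

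I expect the \emph{balance step} to be the main obstacle. Unlike a pure direction-finding problem, scaling $\mathbf{w}$ affects the numerator and denominator of $\Lambda$ differently on the two sides of the crossing point $t_0$, so the regions $t>t_0$ and $t<t_0$ must be treated separately, and the degenerate directions with $a\le b$ (which collapse to the trivial zero-secrecy solution) must be disposed of before the generalized-eigenvalue machinery can be invoked. Once balance is secured, the remaining reduction to a generalized eigenvalue problem and the matrix-square-root normalization are routine.
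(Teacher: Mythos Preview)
Your proposal is correct and follows essentially the same route as the paper's proof in Appendix~D: establish the balance condition $\Psi(\mathbf{w}^*)=\Upsilon(\mathbf{w}^*)$ (equivalently $\mathbf{w}^H\mathbf{\Xi}\mathbf{w}=1$), rewrite the objective on the balance set as the generalized Rayleigh quotient $\mathbf{w}^H(\mathbf{\Xi}+\mathbf{h}_K\mathbf{h}_K^H/\sigma_K^2)\mathbf{w}\big/\mathbf{w}^H(\mathbf{\Xi}+\lambda\mathbf{J}/\sigma_E^2)\mathbf{w}$, drop the normalization by scale invariance, and solve via the substitution $\mathbf{q}=(\mathbf{\Xi}+\lambda\mathbf{J}/\sigma_E^2)^{1/2}\mathbf{w}$. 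Your ray analysis for the balance step is actually more careful than the paper's short contradiction argument (which asserts monotonicity of $\Upsilon$ in $\|\mathbf{w}\|_2^2$ without fixing a direction), since you explicitly dispose of the degenerate directions with $a\le b$ that the paper glosses over.
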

\begin{proof}
See Appendix D.
\end{proof}
Since $\mathrm{P}3$ is equivalent to $\mathrm{P}2$ under $N\rightarrow\infty$,
the asymptotic optimal $\mathbf{w}^*$ to $\mathrm{P}2$ is given by \eqref{w*} according to \textbf{Proposition 3}.
By putting $\mathbf{w}=\mathbf{w}^*$ into \eqref{z*} of \textbf{Proposition 2}, the optimal $\mathbf{z}^*$ to $\mathrm{P}2$ under $N\rightarrow\infty$ is also obtained.
Therefore, the entire procedure for deriving the solution to $\mathrm{P}2$ with large $N$ is to execute \eqref{w*} and \eqref{z*} sequentially.
The computational complexity of the closed-form AAUC algorithm is dominated by computing $\left(\mathbf{\Xi}+\lambda\mathbf{J}/\sigma_E^2\right)^{-1/2}$, which requires a complexity of $\mathcal{O}((K-1)^{3})$.

\subsection{Large Number of Users}

When $K$ is very large, the orthogonal condition does not hold and the closed-form AAUC scheme is not applicable.
To this end, this subsection will propose the BFOM based AAUC scheme for solving $\mathrm{P}2$ when $K$ is large.

More specifically, as $K\rightarrow+\infty$, $||\mathbf{h}_K||_2^2\gg \mathrm{Tr}(\mathbf{J})$ holds. 
Therefore, we have $\Upsilon(\mathbf{w})\rightarrow \frac{1}{2}\mathrm{log}_2\left(1+|\mathbf{h}_K^H\mathbf{w}|^2/\sigma_K^2\right)$, and the optimal $\mathbf{w}^*$ is given by $\mathbf{w}^*=\sqrt{p}\,\mathbf{h}_K/||\mathbf{h}_K||_2$, with $p\geq 0$ being the transmit power to be determined.
By putting $\mathbf{w}=\sqrt{p}\,\mathbf{h}_K/||\mathbf{h}_K||_2$ into $\mathrm{P}2$, $\mathrm{P}2$ under $K\rightarrow+\infty$ is re-written as
\begin{align}
\mathrm{P}4:\mathop{\mathrm{max}}_{\substack{\mathbf{z},\,p\geq 0}}
~&\mathop{\mathrm{min}}\left(\mathop{\mathrm{min}}_{k=1,...,K-1}~
\frac{|\mathbf{g}^H_{k}\mathbf{U}\mathbf{z}|^2}{\sigma_k^2},\,\frac{p||\mathbf{h}_K||_2^2}{\sigma_K^2}\right),
\nonumber\\
~~~~~~\mathrm{s. t.}~~&||\mathbf{z}||^2_2+p\leq 2P_{\rm{max}}, 
\end{align}
where we have removed $\frac{1}{2}\mathrm{log}_2(1+\cdot)$ in the objective function due to its monotonicity.
It can be seen from $\mathrm{P}4$ that the objective function to be maximized is non-concave.
Therefore, $\mathrm{P}4$ is a nonconvex optimization problem, which can also be solved via the SCO algorithm.
Specifically, with any feasible starting point $(\mathbf{z}^{[0]},p^{[0]})$, we execute the following update at the $n^{\mathrm{th}}$ iteration:
\begin{align}
&\left(\mathbf{z}^{[n+1]},p^{[n+1]}\right)
=\mathop{\mathrm{argmax}}_{\substack{\mathbf{z},\,p\geq 0\\||\mathbf{z}||^2_2+p\leq 2P_{\rm{max}}}}
\mathop{\mathrm{min}}\Bigg\{\mathop{\mathrm{min}}_{k=1,...,K-1}~
\nonumber\\
&
\left(2\mathrm{Re}\left[\frac{(\mathbf{z}^{[n]})^H\mathbf{U}^H\mathbf{g}_{k}\mathbf{g}^H_{k}\mathbf{U}\mathbf{z}}{\sigma_k^2}\right)-\frac{|\mathbf{g}^H_{k}\mathbf{U}\mathbf{z}^{[n]}|^2}{\sigma_k^2}\right),
\,\frac{p||\mathbf{h}_K||_2^2}{\sigma_K^2}\Bigg\}, \label{P4n+1}
\end{align}
According to \cite{sco1,sco2,sco3,sco4,sco5,sco6}, the above iterative procedure is guaranteed to converge to a KKT solution to $\mathrm{P}4$. 

Based on the SCO framework, the remaining question is how to compute \eqref{P4n+1}. 
The major challenge comes from the non-smooth operator $\mathop{\mathrm{min}}_{k=1,...,K-1}$ in the objective function, which hinders us from computing the gradients.
To this end, in the following, problem \eqref{P4n+1} will be reformulated into a smooth bilevel optimization problem with $\ell_1$ norm and $\ell_2$ norm constraints.
Furthermore, since the projection onto $\ell_1$ norm in Euclidean space involves a high computational complexity, we propose to execute the projection in non-Euclidean space.
This is achieved by using Kullback-Leibler (KL) divergence as the distance measure, and the resultant algorithm is termed BFOM based AAUC.

First of all, we derive the procedure that equivalently transforms \eqref{P4n+1} into a smooth optimization problem.
To this end, variables $\mathbf{x}=[\mathrm{Re}(\mathbf{z}^T),\mathrm{Im}(\mathbf{z}^T)]^T\in\mathbb{R}^{(2N-2)\times 1}$ and $\bm{\gamma}\in\mathbb{R}^{K\times 1}$ are introduced with
$\mathbf{x}\in\mathcal{A}$ and $\bm{\gamma}\in\mathcal{B}$, where
\begin{align}
\mathcal{A}=&\left\{\mathbf{x}:||\mathbf{x}||^2_2\leq 2P_{\rm{max}}\right\},
\nonumber\\
\mathcal{B}=&\{\bm{\gamma}\in\mathbb{R}^{K\times 1}_+:||\bm{\gamma}||_1=1\}.
\end{align}
Then \eqref{P4n+1} is equivalent to
\begin{align}
&\mathrm{P}5:
\mathop{\mathrm{max}}_{\substack{\mathbf{x}\in\mathcal{A}}}
~\mathop{\mathrm{min}}_{\bm{\gamma}\in\mathcal{B}}~\Theta^{[n]}\left(\mathbf{x},\bm{\gamma}\right),
\end{align}
where
\begin{align}\label{rkn}
\Theta^{[n]}\left(\mathbf{x},\bm{\gamma}\right)&=\mathop{\sum}_{k=1}^{K-1}\gamma_k\left((\mathbf{r}_k^{[n]})^T\mathbf{x}
+t_k^{[n]}\right),
\nonumber\\
&~~~
+\frac{\gamma_K||\mathbf{h}_K||_2^2\left(2P_{\rm{max}}-||\mathbf{x}||_2^2\right)}{\sigma_K^2},
\nonumber\\
\mathbf{r}_{k}^{[n]}&=\frac{2}{\sigma_k^2}
\left[
\begin{array}{cccc}
\mathrm{Re}\left(\mathbf{U}^H\mathbf{g}_{k}\mathbf{g}_{k}^H\mathbf{U}\mathbf{z}^{[n]}\right)
\\
\mathrm{Im}\left(\mathbf{U}^H\mathbf{g}_{k}\mathbf{g}_{k}^H\mathbf{U}\mathbf{z}^{[n]}\right)
\end{array}
\right],\quad \forall k\neq K,
\nonumber\\
t_k^{[n]}&=-\frac{|\mathbf{g}^H_{k}\mathbf{U}\mathbf{z}^{[n]}|^2}{\sigma_k^2},\quad \forall k\neq K.
\end{align}

Now, it can be seen from $\mathrm{P}5$ that $\Theta^{[n]}(\mathbf{x},\bm{\gamma})$ is differentiable in both $\mathbf{x}$ and $\bm{\gamma}$.
Moreover, $\Theta^{[n]}$ is a bilevel function of two variables, with the upper layer variable $\mathbf{x}$ constrained by $\ell_2$ norm and the lower layer variable $\bm{\gamma}$ constrained by $\ell_1$ norm.
To address this bilevel problem, the proposed BFOM starts from a feasible $\mathbf{x}=\mathbf{y}^{[0]}\in\mathcal{A}$ and $\bm{\gamma}=\bm{\eta}^{[0]}\in\mathcal{B}$ (e.g., $\mathbf{y}^{[0]}=\bm{0}$ and $\bm{\eta}^{[0]}=\bm{1}_K/K$), and solves $\mathrm{P}5$ via the following update at the $(m+1)^{\mathrm{th}}$ iteration \cite{BFOM1}:
\begin{subequations}
\begin{align}
\mathbf{y}^{\diamond}&=
\mathop{\mathrm{argmin}}_{\mathbf{x}\in\mathcal{A}}~
W_{\mathcal{A}}\left(\mathbf{x},\mathbf{y}^{[m]}\right)
\nonumber\\
&\quad{}
-\frac{1}{2L}\,\mathbf{x}^T\nabla_{\mathbf{x}}\Theta^{[n]}
\left(\mathbf{x},\bm{\gamma}\right)|_{\mathbf{x}=\mathbf{y}^{[m]},\bm{\gamma}=\bm{\eta}^{[m]}},
\label{BFOM1}
\\
\bm{\eta}^{\diamond}&=
\mathop{\mathrm{argmin}}_{\bm{\gamma}\in\mathcal{B}}~W_{\mathcal{B}}\left(\bm{\gamma},\bm{\eta}^{[m]}\right)
\nonumber\\
&\quad{}
+\frac{1}{2L}\,\bm{\gamma}^T\nabla_{\bm{\gamma}}\Theta^{[n]}
\left(\mathbf{x},\bm{\gamma}\right)|_{\mathbf{x}=\mathbf{y}^{[m]},\bm{\gamma}=\bm{\eta}^{[m]}}, \label{BFOM2}
\\
\mathbf{y}^{[m+1]}&=
\mathop{\mathrm{argmin}}_{\mathbf{x}\in\mathcal{A}}~
W_{\mathcal{A}}\left(\mathbf{x},\mathbf{y}^{[m]}\right)
\nonumber\\
&\quad{}
-\frac{1}{2L}\, \mathbf{x}^T\nabla_{\mathbf{x}}\Theta^{[n]}
\left(\mathbf{x},\bm{\gamma}\right)|_{\mathbf{x}=\mathbf{y}^{\diamond},\bm{\gamma}=\bm{\eta}^{\diamond}},
\label{BFOM3}
\\
\bm{\eta}^{[m+1]}&=
\mathop{\mathrm{argmin}}_{\bm{\gamma}\in\mathcal{B}}~W_{\mathcal{B}}\left(\bm{\gamma},\bm{\eta}^{[m]}\right)
\nonumber\\
&\quad{}
+\frac{1}{2L}\,\bm{\gamma}^T\nabla_{\bm{\gamma}}\Theta^{[n]}
\left(\mathbf{x},\bm{\gamma}\right)|_{\mathbf{x}=\mathbf{y}^{\diamond},\bm{\gamma}=\bm{\eta}^{\diamond}}, \label{BFOM4}
\end{align}
\end{subequations}
where $W_{\mathcal{A}}$ is the Euclidean distance induced by $\ell_2$ norm and $W_{\mathcal{B}}$ is the KL divergence induced by $\ell_1$ norm:
\begin{align}
&W_{\mathcal{A}}\left(\mathbf{x},\mathbf{y}^{[m]}\right)=\frac{1}{2}||\mathbf{x}-\mathbf{y}^{[m]}||^2_2,  \label{DA}
\\
&W_{\mathcal{B}}\left(\bm{\gamma},\bm{\eta}^{[m]}\right)=\mathop{\sum}_{k=1}^K\gamma_k\,\mathrm{ln}\left(\frac{\gamma_k}{\eta^{[m]}_k}\right).  \label{DB}
\end{align}
Furthermore, the gradients $\nabla_{\mathbf{x}}\Theta^{[n]}$ and $\nabla_{\bm{\gamma}}\Theta^{[n]}$ are given by
\begin{subequations}
\begin{align}
\nabla_{\mathbf{x}}\Theta^{[n]}
\left(\mathbf{x},\bm{\gamma}\right)
=&
\mathop{\sum}_{k=1}^{K-1}\gamma_k\mathbf{r}_k^{[n]}
-\frac{2\gamma_K||\mathbf{h}_K||_2^2\,\mathbf{x}}{\sigma_K^2},
\label{gradient1}\\
\nabla_{\bm{\gamma}}\Theta^{[n]}
\left(\mathbf{x},\bm{\gamma}\right)
=&\Big[(\mathbf{r}_1^{[n]})^T\mathbf{x}+t_1^{[n]},
\cdots,
(\mathbf{r}_{K-1}^{[n]})^T\mathbf{x}+t_{K-1}^{[n]},
\nonumber\\
&
||\mathbf{h}_K||_2^2\left(2P_{\rm{max}}-||\mathbf{x}||_2^2\right)\big/\sigma_K^2\Big]^T.
\label{gradient2}
\end{align}
\end{subequations}
Finally, $L$ is the Bregman Lipschitz constant satisfying
\begin{subequations}
\begin{align}
&||\nabla_{\mathbf{x}}\Theta^{[n]}
\left(\mathbf{x},\bm{\gamma}\right)-\nabla_{\mathbf{x}}\Theta^{[n]}
\left(\mathbf{x}',\bm{\gamma}\right)||_2
\leq
L||\mathbf{x}-\mathbf{x}'||_2,
\label{Lips1}\\
&||\nabla_{\mathbf{x}}\Theta^{[n]}
\left(\mathbf{x},\bm{\gamma}\right)-\nabla_{\mathbf{x}}\Theta^{[n]}
\left(\mathbf{x},\bm{\gamma}'\right)||_2
\leq L||\bm{\gamma}-\bm{\gamma}'||_1,
\label{Lips2} \\
&||\nabla_{\bm{\gamma}}\Theta^{[n]}
\left(\mathbf{x},\bm{\gamma}\right)-\nabla_{\bm{\gamma}}\Theta^{[n]}
\left(\mathbf{x},\bm{\gamma}'\right)||_{\infty}
\leq
L||\bm{\gamma}-\bm{\gamma}'||_1,
\label{Lips3} \\
&||\nabla_{\bm{\gamma}}\Theta^{[n]}
\left(\mathbf{x},\bm{\gamma}\right)-\nabla_{\bm{\gamma}}
\Theta^{[n]}
\left(\mathbf{x}',\bm{\gamma}\right)||_{\infty}
\leq L||\mathbf{x}-\mathbf{x}'||_2, \label{Lips4} \\
&\forall\mathbf{x}, \mathbf{x}'\in\mathcal{A},\quad \forall\bm{\gamma}, \bm{\gamma}'\in\mathcal{B}. \nonumber
\end{align}
\end{subequations}
The following proposition derives a valid $L$ satisfying the above conditions.
\begin{proposition}
The conditions \eqref{Lips1}--\eqref{Lips4} are satisfied with $L=\widehat{L}$, where
\begin{align}
\widehat{L}=&\mathrm{max}~\Big[||\mathbf{r}_1^{[n]}||_2,\cdots,||\mathbf{r}_{K-1}^{[n]}||_2,
\nonumber\\
&
2||\mathbf{h}_K||_2^2\,\mathrm{max}\left(\sqrt{2P_{\rm{max}}},1\right)\big/\sigma_K^2
\Big]. \label{Lips}
\end{align}
\end{proposition}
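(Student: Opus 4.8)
The plan is to verify the four Bregman Lipschitz conditions \eqref{Lips1}--\eqref{Lips4} one at a time, exploiting two structural features of $\Theta^{[n]}$: it is \emph{linear} in $\bm{\gamma}$ and \emph{quadratic} in $\mathbf{x}$, while both feasible sets are bounded ($\|\mathbf{x}\|_2\le\sqrt{2P_{\rm{max}}}$ on $\mathcal{A}$, and $0\le\gamma_k\le 1$ with $\|\bm{\gamma}\|_1=1$ on $\mathcal{B}$). The workhorse inequalities will be the triangle inequality, Cauchy--Schwarz, and the $\ell_1$--$\ell_2$ bound $\big\|\sum_k a_k\mathbf{v}_k\big\|_2\le(\max_k\|\mathbf{v}_k\|_2)\,\|\mathbf{a}\|_1$.

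First I would dispose of the two conditions that do not involve the quadratic term. For \eqref{Lips3}, since $\Theta^{[n]}$ is linear in $\bm{\gamma}$, the gradient $\nabla_{\bm{\gamma}}\Theta^{[n]}$ in \eqref{gradient2} is independent of $\bm{\gamma}$; hence the difference $\nabla_{\bm{\gamma}}\Theta^{[n]}(\mathbf{x},\bm{\gamma})-\nabla_{\bm{\gamma}}\Theta^{[n]}(\mathbf{x},\bm{\gamma}')$ vanishes and \eqref{Lips3} holds for any $L\ge 0$. For \eqref{Lips1}, the term $\sum_k\gamma_k\mathbf{r}_k^{[n]}$ in \eqref{gradient1} cancels, leaving
\begin{align}
\nabla_{\mathbf{x}}\Theta^{[n]}(\mathbf{x},\bm{\gamma})-\nabla_{\mathbf{x}}\Theta^{[n]}(\mathbf{x}',\bm{\gamma})
=-\frac{2\gamma_K\|\mathbf{h}_K\|_2^2}{\sigma_K^2}(\mathbf{x}-\mathbf{x}'),
\nonumber
\end{align}
whose $\ell_2$ norm is at most $\tfrac{2\|\mathbf{h}_K\|_2^2}{\sigma_K^2}\|\mathbf{x}-\mathbf{x}'\|_2$ because $\gamma_K\le 1$ on $\mathcal{B}$; this is dominated by $\widehat{L}\|\mathbf{x}-\mathbf{x}'\|_2$.

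The cross conditions \eqref{Lips2} and \eqref{Lips4} are where the boundedness of $\mathcal{A}$ becomes essential, and I expect \eqref{Lips4} to be the main obstacle because of the quadratic term $\|\mathbf{x}\|_2^2$, which is not globally Lipschitz without the compactness of $\mathcal{A}$. For \eqref{Lips2}, fixing $\mathbf{x}$ and applying the triangle inequality gives the sum $\sum_{k=1}^{K-1}|\gamma_k-\gamma_k'|\,\|\mathbf{r}_k^{[n]}\|_2+|\gamma_K-\gamma_K'|\,\tfrac{2\|\mathbf{h}_K\|_2^2\|\mathbf{x}\|_2}{\sigma_K^2}$; bounding $\|\mathbf{x}\|_2\le\sqrt{2P_{\rm{max}}}$ and factoring out the maximum of the coefficients yields the required $\ell_1$ bound. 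For \eqref{Lips4}, the first $K-1$ components of the gradient difference are $(\mathbf{r}_k^{[n]})^T(\mathbf{x}-\mathbf{x}')$, handled directly by Cauchy--Schwarz, while the $K$-th component is proportional to $\|\mathbf{x}'\|_2^2-\|\mathbf{x}\|_2^2$. Writing this as $(\mathbf{x}'-\mathbf{x})^T(\mathbf{x}'+\mathbf{x})$ and applying Cauchy--Schwarz with $\|\mathbf{x}'+\mathbf{x}\|_2\le\|\mathbf{x}\|_2+\|\mathbf{x}'\|_2\le 2\sqrt{2P_{\rm{max}}}$ converts the quadratic into the Lipschitz bound $\tfrac{2\|\mathbf{h}_K\|_2^2\sqrt{2P_{\rm{max}}}}{\sigma_K^2}\|\mathbf{x}-\mathbf{x}'\|_2$; taking the $\ell_\infty$ norm over the $K$ components then reproduces exactly the maximum defining $\widehat{L}$.

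Finally I would assemble the four bounds. Conditions \eqref{Lips2} and \eqref{Lips4} require the coefficient $\tfrac{2\|\mathbf{h}_K\|_2^2\sqrt{2P_{\rm{max}}}}{\sigma_K^2}$, the $\sqrt{2P_{\rm{max}}}$ factor arising from the compactness of $\mathcal{A}$, whereas \eqref{Lips1} only needs $\tfrac{2\|\mathbf{h}_K\|_2^2}{\sigma_K^2}$. This is precisely why the last entry of $\widehat{L}$ in \eqref{Lips} carries the factor $\max(\sqrt{2P_{\rm{max}}},1)$, so that a single choice $L=\widehat{L}$ simultaneously dominates all four inequalities, completing the argument.
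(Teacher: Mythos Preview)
Your proposal is correct and follows essentially the same route as the paper's proof: both verify \eqref{Lips1}--\eqref{Lips4} directly from the gradient formulas \eqref{gradient1}--\eqref{gradient2}, using $\gamma_K\le 1$ for \eqref{Lips1}, the triangle inequality plus $\|\mathbf{x}\|_2\le\sqrt{2P_{\rm{max}}}$ for \eqref{Lips2}, linearity in $\bm{\gamma}$ for \eqref{Lips3}, and Cauchy--Schwarz together with the boundedness of $\mathcal{A}$ for \eqref{Lips4}. The only cosmetic difference is that for the quadratic term in \eqref{Lips4} the paper factors $\bigl|\,\|\mathbf{x}\|_2^2-\|\mathbf{x}'\|_2^2\bigr|=(\|\mathbf{x}\|_2+\|\mathbf{x}'\|_2)\,\bigl|\,\|\mathbf{x}\|_2-\|\mathbf{x}'\|_2\bigr|$ and then invokes the reverse triangle inequality, whereas you use the identity $\|\mathbf{x}'\|_2^2-\|\mathbf{x}\|_2^2=(\mathbf{x}'-\mathbf{x})^T(\mathbf{x}'+\mathbf{x})$ and Cauchy--Schwarz; both lead to the same bound.
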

\begin{proof}
See Appendix E.
\end{proof}
Although $\widehat{L}$ is a valid Lipschiz constant according to \textbf{Proposition 4}, whether a smaller $L$ with $L<\widehat{L}$ would satisfy \eqref{Lips1}--\eqref{Lips4} is not known.
Thus, in practice we need to fine-tune the hyper-parameter $L$ and this paper sets $L=\widehat{L}/K$.

The key idea of \eqref{BFOM1}--\eqref{BFOM4} is to update the variables along their gradient direction, while keeping the updated point $\{\mathbf{y}^{\diamond},\bm{\eta}^{\diamond},\mathbf{y}^{[m+1]},\bm{\eta}^{[m+1]}\}$ close to the current point $\{\mathbf{y}^{[m]},\bm{\eta}^{[m]}\}$.
Furthermore, by using the intermediate point $\{\mathbf{y}^{\diamond},\bm{\eta}^{\diamond}\}$ in \eqref{BFOM1}--\eqref{BFOM2}, we can compute the look-ahead gradient for updating $\{\mathbf{y}^{[m+1]},\bm{\eta}^{[m+1]}\}$ as in \eqref{BFOM3}--\eqref{BFOM4}.
As proved in \cite{BFOM1,BFOM2,BFOM3,BFOM4}, the iterative procedure \eqref{BFOM1}--\eqref{BFOM4} is guaranteed to converge to the optimal solution to $\mathrm{P}5$ with a convergence rate of $\mathcal{O}(1/m)$.

Notice that the KKT optimality conditions can be adopted to derive the closed-form expressions for \eqref{BFOM1}--\eqref{BFOM4}.
In particular, equations \eqref{BFOM1}--\eqref{BFOM2} are equivalent to
\begin{subequations}
\begin{align}
\mathbf{y}^{\diamond}=&
{\left(\mathrm{max}\left[\Big|\Big|\mathbf{y}^{[m]}+\frac{\nabla_{\mathbf{x}}\Theta^{[n]}
\left(\mathbf{y}^{[m]},\bm{\eta}^{[m]}\right)}{2L}
\Big|\Big|_2,\sqrt{2P_{\rm{max}}}\right]\right)}^{-1}
\nonumber\\
&
\times\sqrt{2P_{\rm{max}}}\left[\mathbf{y}^{[m]}+\frac{\nabla_{\mathbf{x}}\Theta^{[n]}
\left(\mathbf{y}^{[m]},\bm{\eta}^{[m]}\right)}{2L}\right]
,
\nonumber
\\
\eta^{\diamond}_k=&
{\left(
\sum_{i=1}^K\eta^{[m]}_i\mathrm{exp}\left[-\frac{\nabla_{\gamma_i}\Theta^{[n]}
\left(\mathbf{y}^{[m]},\bm{\eta}^{[m]}\right)}{2L}\right]\right)}^{-1}
\nonumber\\
&\times
\eta^{[m]}_k\mathrm{exp}\left[-\frac{\nabla_{\gamma_k}\Theta^{[n]}
\left(\mathbf{y}^{[m]},\bm{\eta}^{[m]}\right)}{2L}
\right]
,~~\forall k.
\nonumber
\end{align}
\end{subequations}
On the other hand, the closed-form expressions for \eqref{BFOM3}--\eqref{BFOM4} can be derived similarly.
In terms of computational complexity, the total complexity of the BFOM based AAUC would be $\mathcal{O}(\mathcal{I}\, MKN)$, where $\mathcal{I}$ and $M$ are the number of iterations for the SCO and BFOM algorithms, respectively.

\emph{Remark 2:} Since our aim is to obtain an approximate solution, we can terminate the iterative procedure when the norm $||\mathbf{y}^{[m+1]}-\mathbf{y}^{[m]}||$ is small enough, e.g., $||\mathbf{y}^{[m]}-\mathbf{y}^{[m-1]}||<10^{-4}$, or the number of iterations reaches $3000$.
Then, the optimal $\mathbf{x}^*$ to $\mathrm{P}5$ is given by $\mathbf{x}^{*}=\sqrt{\tau}\,\mathbf{y}^{[m]}/||\mathbf{y}^{[m]}||_2$, where $\tau$ solves
\begin{align}\label{scale}
\mathop{\mathrm{min}}_{k=1,\cdots,K-1}\frac{\sqrt{\tau}(\mathbf{r}_k^{[n]})^T\mathbf{y}^{[m]}}{||\mathbf{y}^{[m]}||_2}
+t_k^{[n]}
=\frac{||\mathbf{h}_K||_2^2\left(2P_{\rm{max}}-\tau\right)}{\sigma_K^2}.
\end{align}
Accordingly, the optimal $\mathbf{z}^*$ and $p^*$ to problem \eqref{P4n+1} can be recovered as $\mathbf{z}^{*}=[x_1^{*},\cdots,x_{N-1}^{*}]^T+\mathrm{j}\,[x_N^{*},\cdots,x_{2N-2}^{*}]^T$ and $p^*=2P_{\mathrm{max}}-||\mathbf{z}^*||^2_2$.

\subsection{Large Number of Antennas and Users}

When both $N$ and $K$ are very large (with $N/K$ being a large constant), we can combine the derivations in Sections V-A and V-B.
More specifically, based on $|\mathbf{g}_j^H\mathbf{g}_k|^2/||\mathbf{g}_k||_2^2\rightarrow 0$ for $j\neq k$ when $N\rightarrow \infty$ and
$\Upsilon(\mathbf{w})\rightarrow \frac{1}{2}\mathrm{log}_2\left(1+|\mathbf{h}_K^H\mathbf{w}|^2/\sigma_K^2\right)$ when $K\rightarrow \infty$, the problem $\rm{P}2$ under $N,K\rightarrow \infty$ is
\begin{align}
\mathrm{P}6:\mathop{\mathrm{max}}_{\substack{p}}
~&
\mathrm{min}\Bigg[\frac{1}{2}\mathrm{log}_2\left(1+\frac{2P_{\rm{max}}-p}{\mathop{\sum}_{i=1}^{K-1}\sigma_i^2/||\mathbf{U}^H\mathbf{g}_i||^2_2}\right),
\nonumber\\
&
\frac{1}{2}\mathrm{log}_2\left(1+\frac{p||\mathbf{h}_K||_2^2}{\sigma_K^2}\right)
\Bigg], \nonumber
\end{align}
where $p$ is the users' total power to be determined.
In the objective function, the first term inside the minimum function is monotonic decreasing in $p$ while the second term is monotonic increasing in $p$.
Therefore, the optimal $p^*$ to $\rm{P}6$ must satisfy
\begin{align}
&\frac{1}{2}\mathrm{log}_2\left(1+\frac{(2P_{\rm{max}}-p)}{\mathop{\sum}_{i=1}^{K-1}\sigma_i^2/||\mathbf{U}^H\mathbf{g}_i||^2_2}\right)
\nonumber\\
&
=\frac{1}{2}\mathrm{log}_2\left(1+\frac{p||\mathbf{h}_K||_2^2}{\sigma_K^2}\right),
\end{align}
which gives
\begin{align}
& p^* = \frac{2P_{\rm{max}}}{1+\dfrac{||\mathbf{h}_K||_2^2}{\sigma_K^2}\left(\mathop{\sum}_{i=1}^{K-1}\sigma_i^2/||\mathbf{U}^H\mathbf{g}_i||^2_2\right)}.
\end{align}
As a result, the optimal $\mathbf{w}^*$ to $\rm{P}2$ when $K$ and $N$ go to infinity can be recovered as $\mathbf{w}^*=\sqrt{p^*}\mathbf{h}_K/||\mathbf{h}_K||_2$. 
With $\mathbf{w}^*$, the optimal $\mathbf{z}^*$ can be computed using \textbf{Proposition 2}.

\section{IRS Based Scheme}

It can be seen from problem $\rm{P}1$ that the proposed AAUC scheme requires extra power consumption from other devices.
To avoid extra power consumption, the proposed scheme can be integrated with the emerging technology of intelligent reflecting surfaces (IRS) \cite{irs1,irs2,irs3,irs4}.
In particular, we can place the IRS elements at the users $k=1,\cdots,K-1$ (for fair comparison with the non-IRS case).
Then each element of the IRS receives the multi-path signals from the BS, and scatters the combined signal with adjustable amplitude and phase to user $K$.
By employing the IRS technique, the two transmission phases can be combined into one, and the system cost of the proposed scheme can be significantly reduced.

More specifically, the optimization problem of the IRS based scheme is given by
\begin{subequations}
\begin{align}
\mathrm{Q}1:&\mathop{\mathrm{max}}_{\substack{\mathbf{v},\mathbf{u},R}}
\quad\frac{1}{D_K}\int_{0}^{D_K}\mathbb{E}_{\{\phi_k,\delta_k\}}\Bigg\{\Bigg[R-\mathrm{log}_2
\Bigg(1+\frac{1}{\sigma^2_E}
\nonumber\\
&\Big|\mathbf{h}^H_E(\rho,\{\phi_k,\delta_k\})\left(\left[\mathbf{g}_1,\cdots,\mathbf{g}_{K-1}\right]^H\mathbf{v}\circ\mathbf{u}\right)
\Big|^2\Bigg)\Bigg]^+\Bigg\}\mathrm{d}\rho,
\nonumber\\
\mathrm{s. t.}\quad&||\mathbf{v}||^2_2 \leq P_{\rm{max}},\quad |\mathbf{g}^H_{K}\mathbf{v}|=0,
\\
&R\leq\mathrm{log}_2\left(1+\frac{|\mathbf{g}^H_{k}\mathbf{v}|^2}{\sigma_k^2}\right),~~\forall k\neq K,
\\
&R\leq \mathrm{log}_2\Bigg(1
+\frac{\Big|\mathbf{h}_K^H\left(\left[\mathbf{g}_1,\cdots,\mathbf{g}_{K-1}\right]^H\mathbf{v}\circ\mathbf{u}\right)\Big|^2}{\sigma_K^2}\Bigg),
\\
&|u_k|_2\leq 1,\quad \forall k\neq K, \label{IRS}
\end{align}
\end{subequations}
where $\mathbf{u}\in\mathbb{C}^{(K-1)\times 1}$ is the amplitude and phase design at IRS.
By introducing a slack variable $\mathbf{c}=\left[\mathbf{g}_1,\cdots,\mathbf{g}_{K-1}\right]^H\mathbf{v}\circ\mathbf{u}\in\mathbb{C}^{(K-1)\times 1}$ with $|c_k|^2\leq |\mathbf{g}_k^H\mathbf{v}|^2$, and adopting the angular secrecy model in \eqref{smodel}, the problem $\rm{Q}1$ is converted into
\begin{subequations}
\begin{align}
\mathrm{Q}2:\mathop{\mathrm{max}}_{\substack{\mathbf{v},\mathbf{c},R}}
~~&\left[R-\mathrm{log}_2
\left(1+\frac{\lambda\mathbf{c}^H\mathbf{J}\mathbf{c}}{\sigma^2_E}\right)\right]^+,
\nonumber\\
~~~~~~\mathrm{s. t.}~~~~&||\mathbf{v}||^2_2 \leq P_{\rm{max}},\quad |\mathbf{g}^H_{K}\mathbf{v}|=0,
\\
&R\leq\mathrm{log}_2\left(1+\frac{|\mathbf{g}^H_{k}\mathbf{v}|^2}{\sigma_k^2}\right),~~\forall k\neq K,
\\
&R\leq \mathrm{log}_2\left(1+\frac{|\mathbf{h}_K^H\mathbf{c}|^2}{\sigma_K^2}\right),
\\
&|c_k|^2\leq |\mathbf{g}_k^H\mathbf{v}|^2,\quad \forall k\neq K,
\end{align}
\end{subequations}
The above problem has a similar structure as $\mathrm{P}1$ and can be solved by the SCO algorithm.
Notice that the channel estimation of the IRS based scheme is challenging due to the low costs of IRS elements.
Therefore, IRS is not applicable to the case of fast changing channels.

\section{Simulation Results and Discussions}

This section provides simulation results to evaluate the performance of the proposed scheme.
It is assumed that the noise powers $\sigma_1^2=\cdots=\sigma_K^2=\sigma_E^2=-80\,\mathrm{dBm}$ \cite{irs1} (corresponding to power spectral density $-140\,\mathrm{dBm/Hz}$ \cite{wang} with $1\,\mathrm{MHz}$ bandwidth).
The transmit power budget is $P_{\rm{max}}=30\,\mathrm{dBm}$ and the pathloss exponent is set to $\alpha=2.5$.
The pathloss at the distance of $1\,\rm{m}$ is $-40\,\rm{dB}$, which is computed using the 3GPP UMi model \cite{3gpp} and $3.5\,\rm{GHz}$ carrier frequency (i.e., carrier frequency of 5G in China).
We adopt $\lambda=0.64$ for the secrecy model in \eqref{smodel}, and the number of iterations for SCO is set to $\mathcal{I}=20$.

Based on the above settings, we simulate the following massive MIMO system:
\begin{itemize}
\item The BS is located at $(0,0)$.
\item The users are located at $(D_k\,\mathrm{cos}\,\theta_k,D_k\,\mathrm{sin}\,\theta_k)$ with $D_k\sim \mathcal{U}(100,500)$ and
$\theta_k\sim \mathcal{U}(-\pi,\pi)$ for all $k$.
\item The eavesdropper is located at $(D_E\,\mathrm{cos}\,\theta_K,D_E\,\mathrm{sin}\,\theta_K)$ with
$D_E\sim \mathcal{U}(0,D_K)$.
\item All the distances use the unit of $\mathrm{m}$.
\end{itemize}

With $\{D_k,\theta_k,D_E,\theta_E\}$, channels $\{\mathbf{g}_{k},\mathbf{g}_E\}$ are generated according to \eqref{racian}.
The channel $\mathbf{h}_E$ is generated according to \eqref{channel}, where the distance $d_{E,k}$ between the eavesdropper and the user $k$ is computed based on their locations.
The channel $\mathbf{h}_K$ is also generated according to \eqref{channel}, but with $d_{E,k}$ replaced by the distance $d_{K,k}$ between the users $K$ and $k$.
Finally, each point in the figures is obtained by averaging over $50$ simulation runs, with independent channel realizations and locations of users in each run.

\begin{figure*}[!t]
\centering
\subfigure[]{\includegraphics[width=58mm]{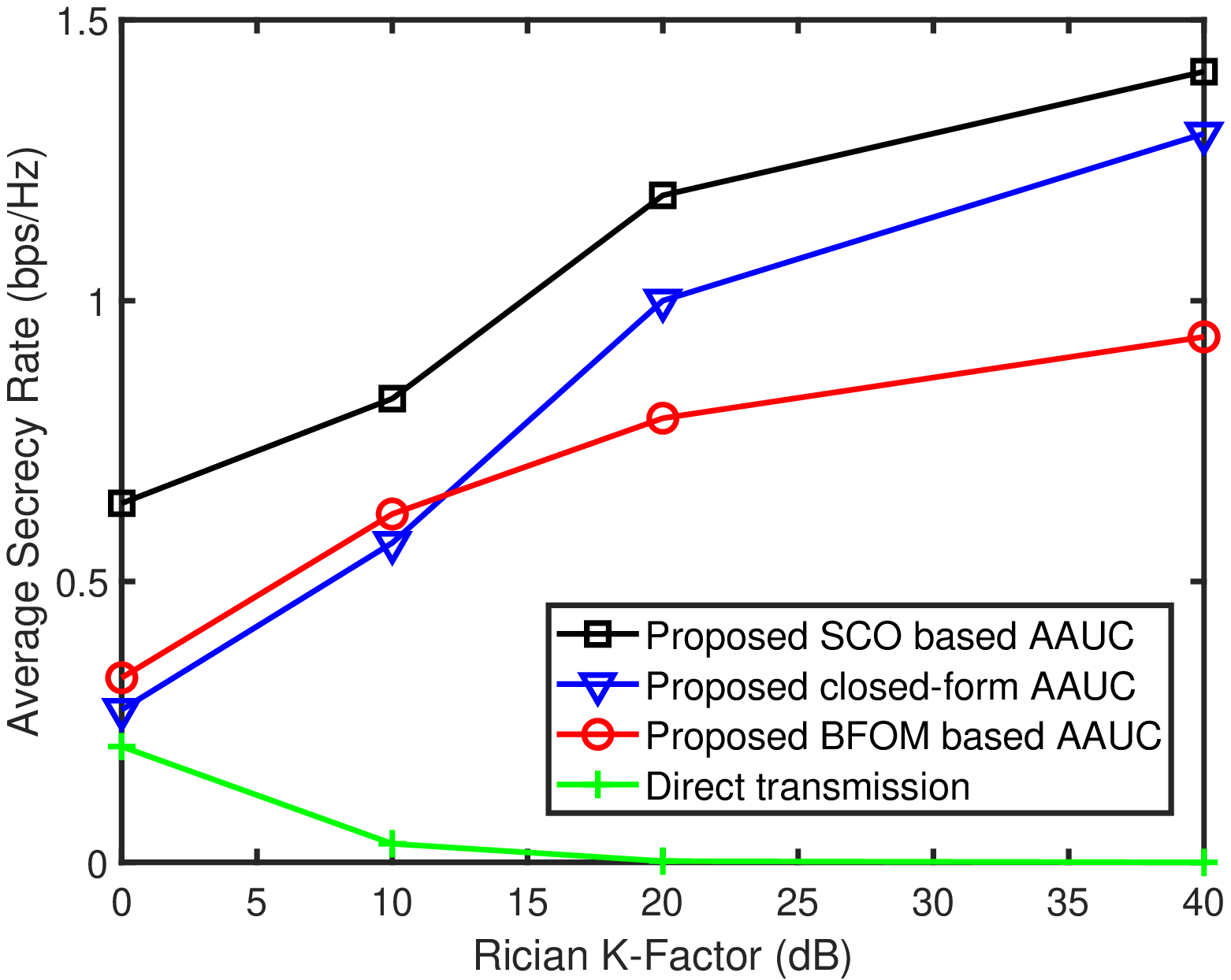}}
\subfigure[]{\includegraphics[width=58mm]{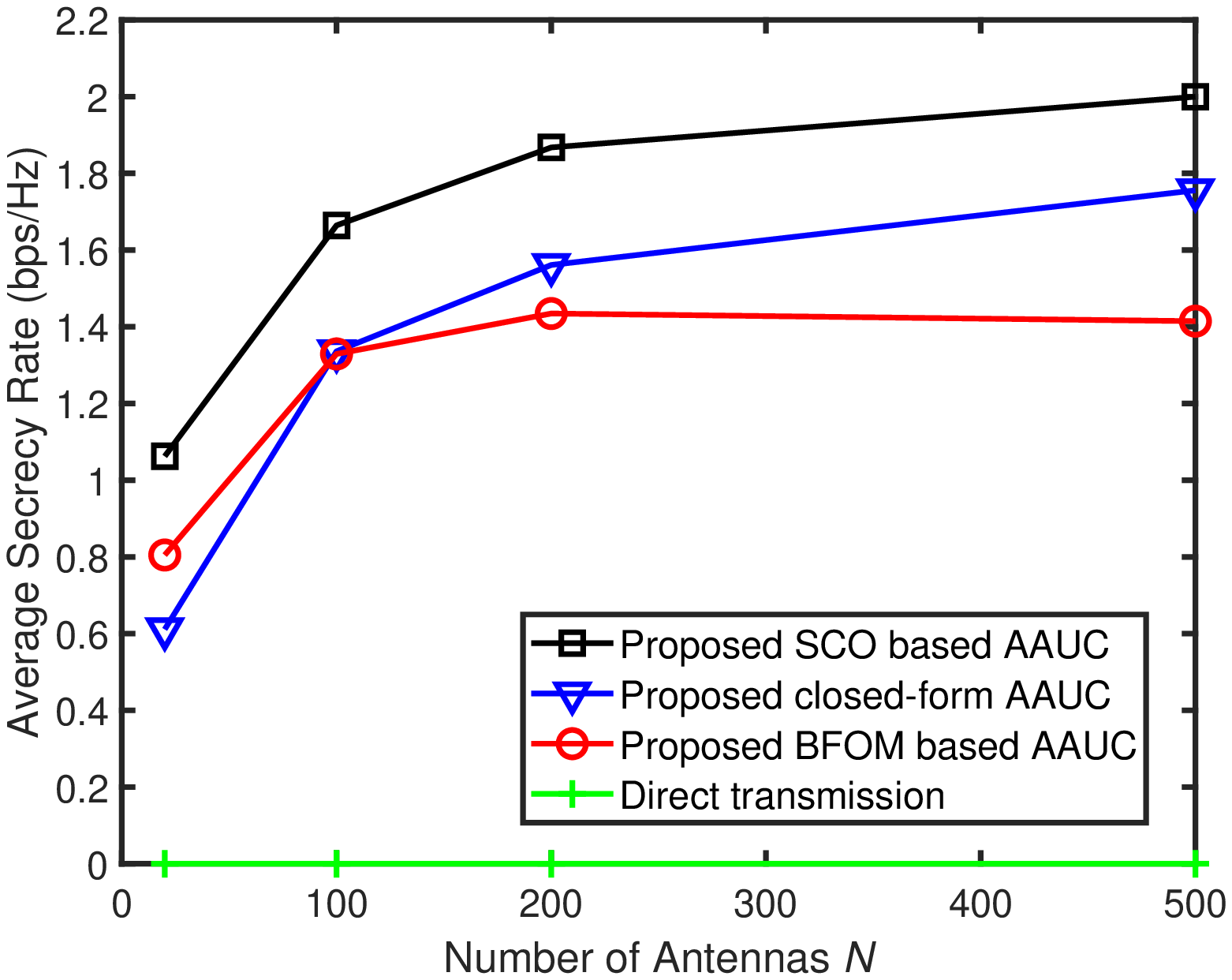}}
\subfigure[]{\includegraphics[width=58mm]{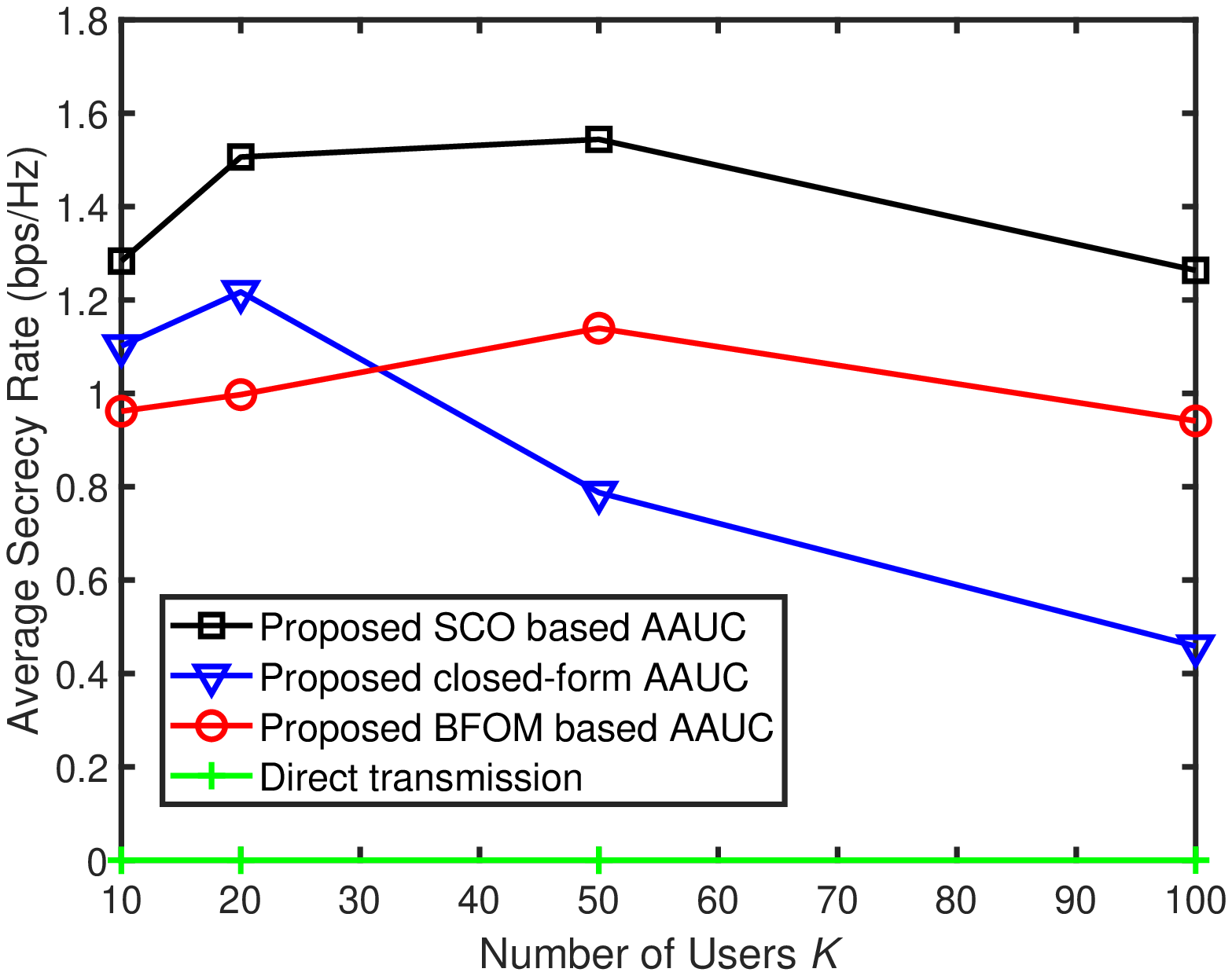}}
\caption{a) Average secrecy rate versus Rician K-factor $K_R$ when $K=10$ and $N=100$; b) Average secrecy rate versus the number of antennas $N$ when $K=20$ and $K_R=30~\mathrm{dB}$; c) Average secrecy rate versus the number of users $K$ when $N=100$ and $K_R=30\,\mathrm{dB}$.}
\end{figure*}

\begin{figure*}[!t]
\centering
\subfigure[]{\includegraphics[width=58mm]{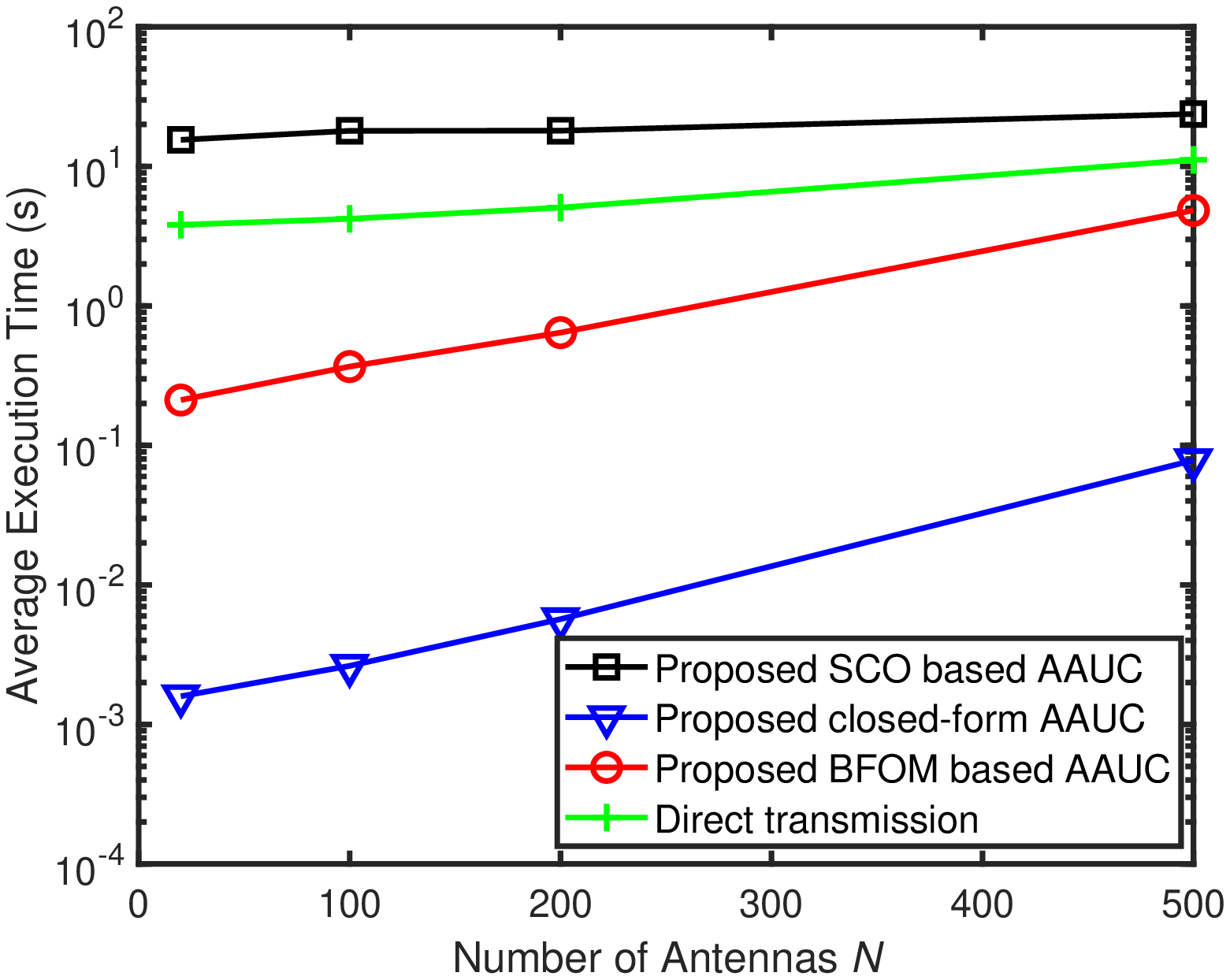}}
\subfigure[]{\includegraphics[width=58mm]{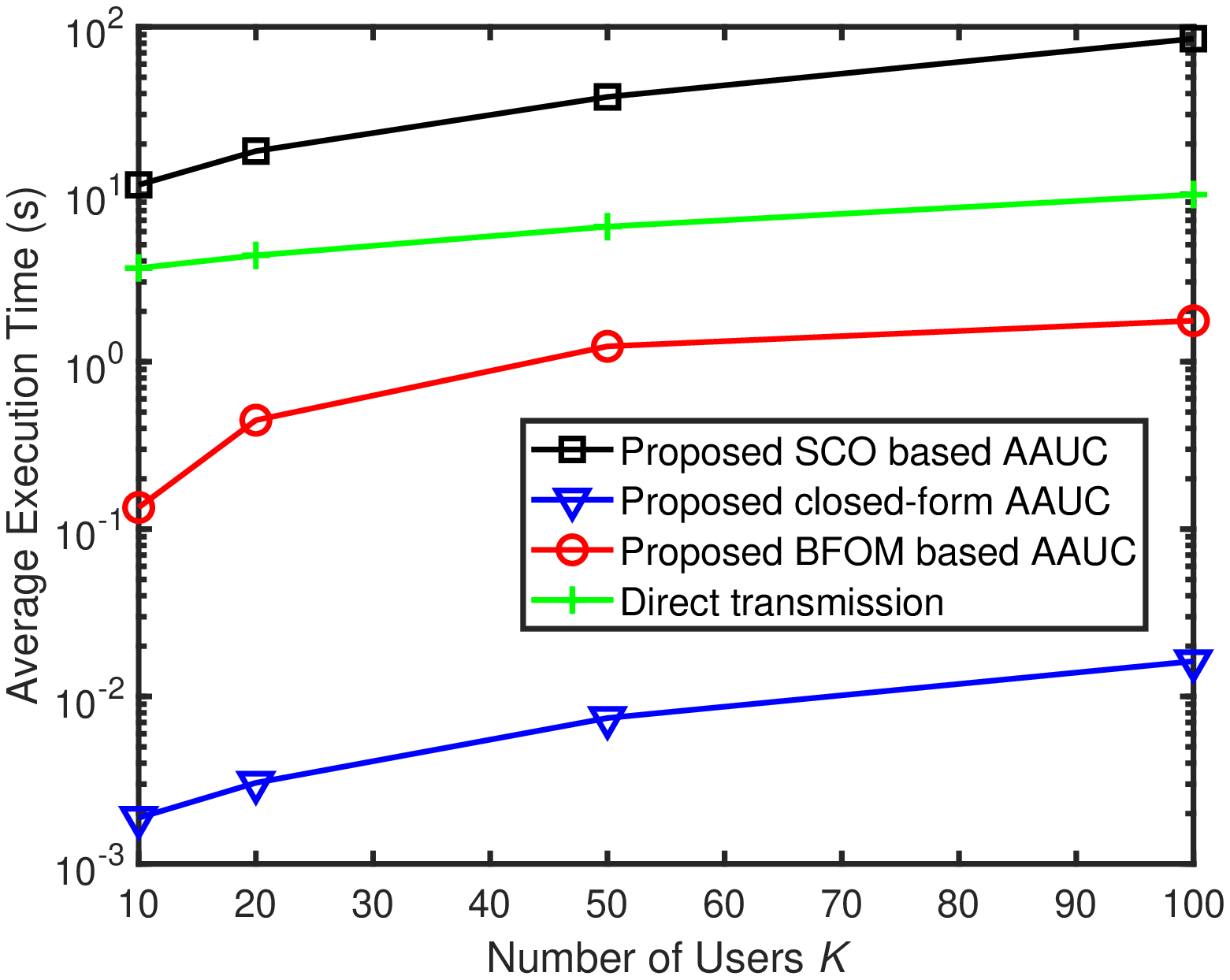}}
\subfigure[]{\includegraphics[width=58mm]{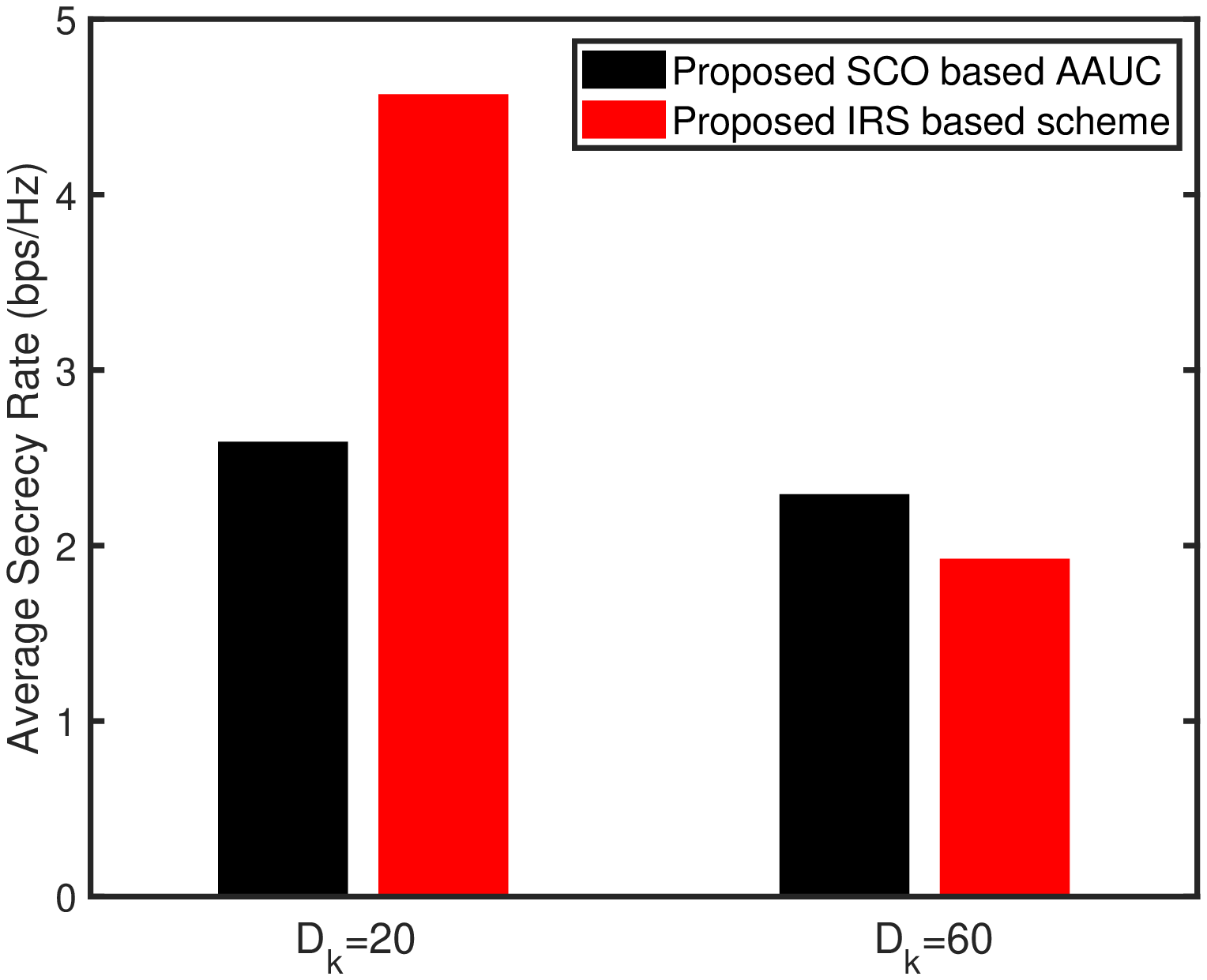}}
\caption{a) Average execution time versus the number of antennas $N$; b) Average execution time versus the number of users $K$; c) Average secrecy rate versus the distance $D_k$ when $N=100$ and $K=10$.}
\end{figure*}

\begin{figure*}[!t]
\centering
\subfigure[]{\includegraphics[width=58mm]{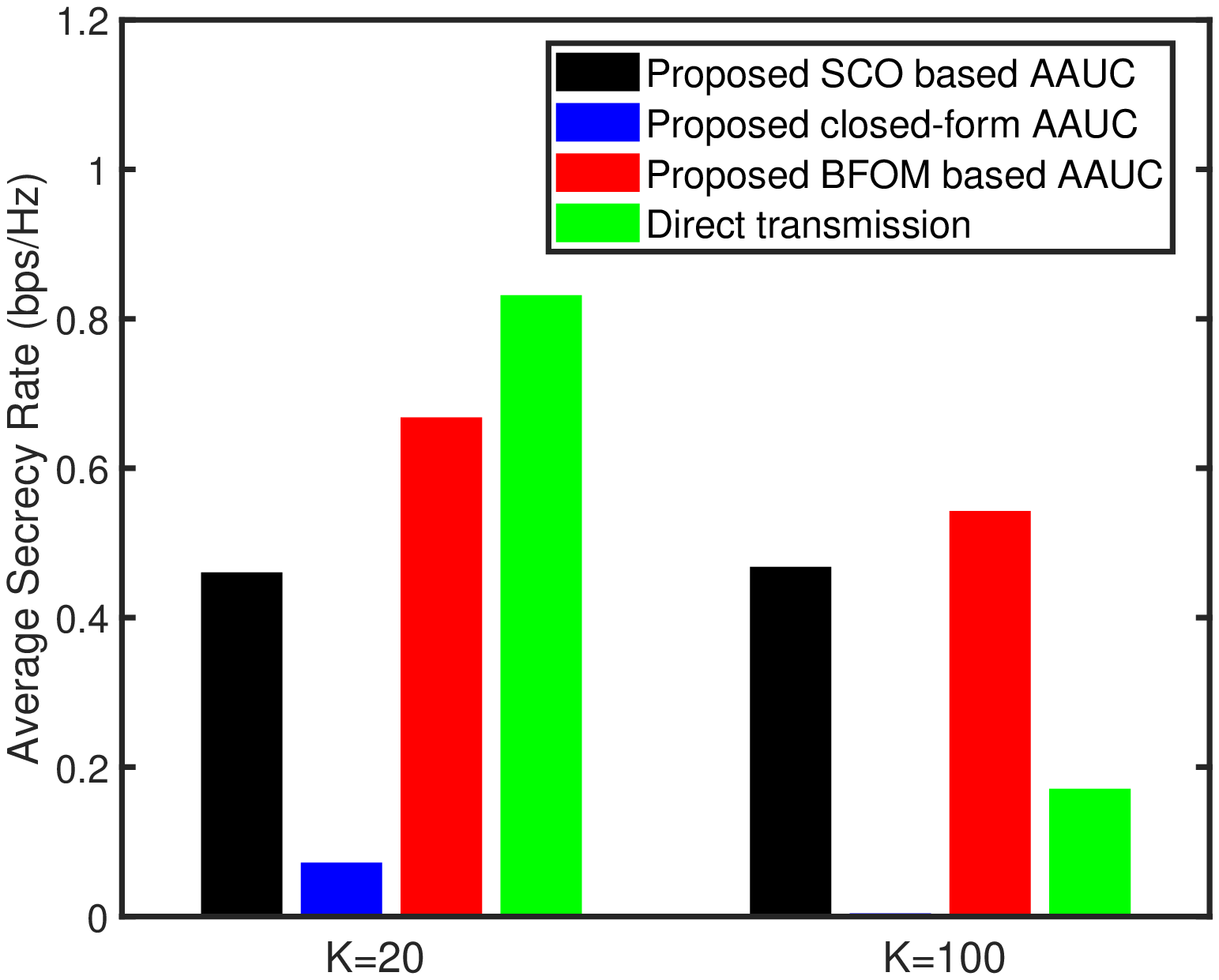}}
\subfigure[]{\includegraphics[width=58mm]{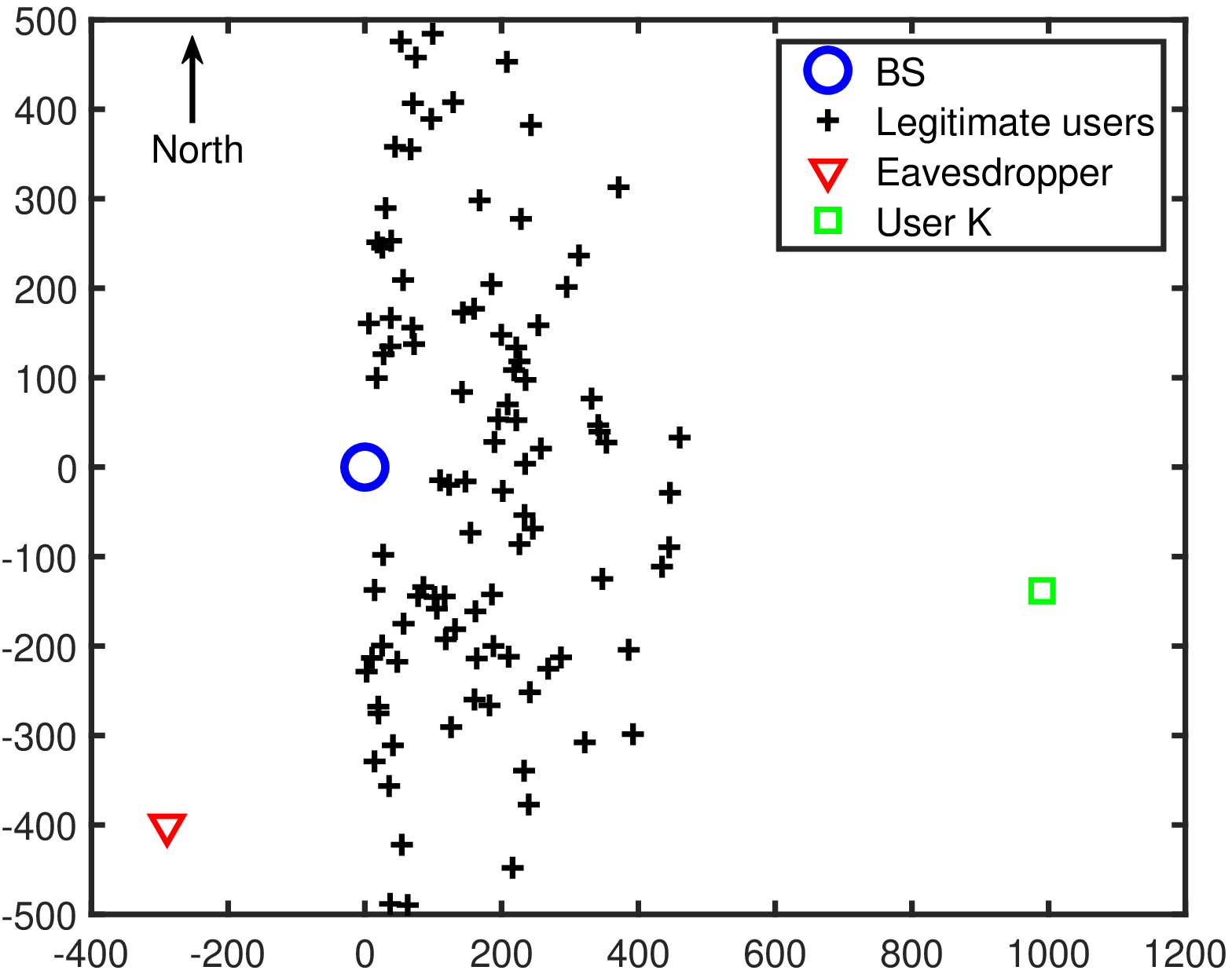}}
\subfigure[]{\includegraphics[width=58mm]{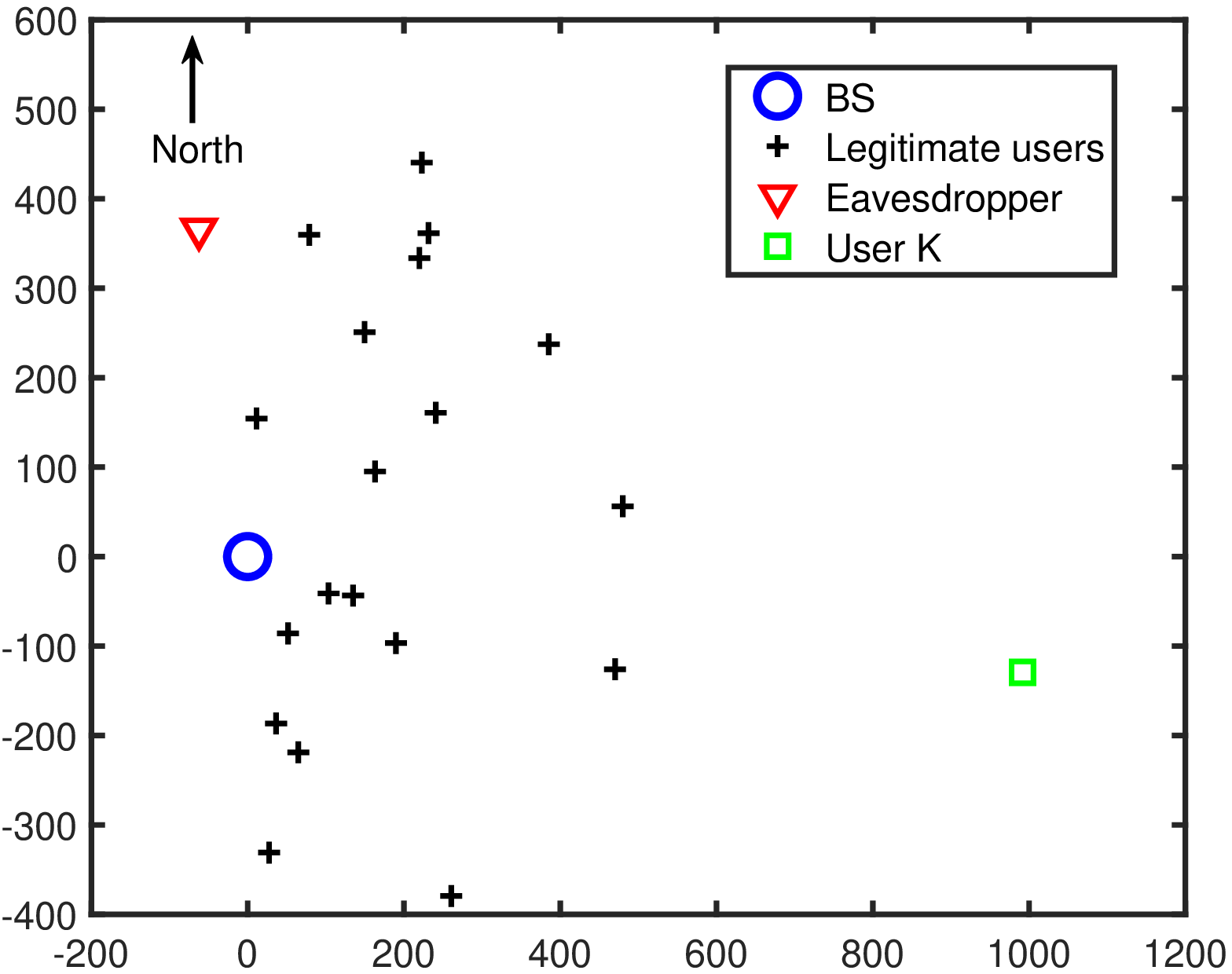}}
\caption{a) Average secrecy rate in Rayleigh fading channels with eavesdropper not in the same angle as one of the users; b) A realization of locations of BS, users, and eavesdropper when $K=100$; c) A realization of locations of BS, users, and eavesdropper when $K=20$.}
\end{figure*}

\subsection{Evaluation of SCO Based AAUC}

In order to verify the performance of the proposed SCO based AAUC, the case of $K=10$ with $N=100$ is simulated, and the average secrecy rate versus Rician K-factor $K_R\in\{0, 10, 20,40\}$ in $\mathrm{dB}$ is shown in Fig.~3a.
Besides the proposed AAUC, we also simulate the direct transmission scheme.
For direct transmission, the multicasting beamforming design $\mathbf{v}$ is obtained by the conic quadratic programming approach \cite{sla}.
It can be seen from Fig. 3a that when $K_R$ is very small, it is still possible to achieve a positive secrecy rate for the direct transmission scheme, since the channels $\{\mathbf{g}_K,\mathbf{g}_E\}$ are not exactly in the same direction due to the non-LOS components.
However, as $K_R$ increases, the direct transmission scheme would lead to zero secrecy rates, since the channels $\mathbf{g}_K$ and $\mathbf{g}_E$ are highly correlated.
This result corroborates our discussions in Section~I that the direct transmission is vulnerable to the eavesdropping in Rician fading environment.
Fortunately, by adopting user cooperation, all the AAUC schemes significantly outperform the direct transmission scheme.
Furthermore, no matter what value $K_R$ takes, the proposed SCO based AAUC always outperforms the closed-form AAUC and the BFOM based AAUC.
This is because SCO based AAUC does not rely on the assumption of large number of antennas or users.
Notice that the secrecy rate of SCO based AAUC increases as $K_R$ increases.
This is because $|\mathbf{g}_K^H\mathbf{v}|=0$ is adopted in $\mathrm{P}1$, and
a higher correlation between $\mathbf{g}_K$ and $\mathbf{g}_E$ means a smaller $|\mathbf{g}_E^H\mathbf{v}|^2$ at the eavesdropper.

\subsection{Evaluation of Large-Scale Optimization Algorithms}

Next, to verify the performance and the low complexity nature of closed-form AAUC when $N$ is large, the case of $K=20$ with $K_R = 30\,\mathrm{dB}$ is simulated, and the average secrecy rate versus the number of antennas $N\in\{20,100,200,500\}$ is shown in Fig. 3b.
It can be seen that when $N$ is small (e.g., $N=20$), the performance gap between SCO based AAUC and closed-form AAUC is large.
This is because the channels $\{\mathbf{g}_k\}$ are not orthogonal for a small $N$, which deviates from the orthogonality assumption adopted in the closed-form AAUC.
However, as $N$ increases, the performance of closed-form AAUC is approaching that of SCO based AAUC.
This corroborates \textbf{Proposition 2} and \textbf{Proposition 3} derived in Section V-A.
On the other hand, Fig.~4a shows the average execution time versus the number of antennas at the BS.
Compared to SCO based AAUC, closed-form AAUC saves the computation times by several orders of magnitude for all the simulated values of $N$, revealing the low-complexity nature of the proposed closed-form solution.

To analyze the performance and the complexity of BFOM based AAUC when $K$ is large, the case of $N=100$ with $K_R=30\,\mathrm{dB}$ is simulated, and the average secrecy rate versus the number of users $K\in\{10, 20, 50,100\}$ is shown in Fig. 3c.
It can be seen that closed-form AAUC performs poorly in this setting since a large $K$ would lead to nonorthogonal channels of $\{\mathbf{g}_k\}$.
In contrast, BFOM based AAUC achieves a satisfactory secrecy rate. 
Moreover, the performance gap between BFOM based AAUC and SCO based AAUC decreases as $K$ increases, which corroborates the discussions in Section V-B.
On the other hand, as shown in Fig.~4b, BFOM based AAUC saves at least $90\%$ (one order of magnitude) of the computation time compared with SCO based AAUC. 
Its flat curve of execution time reveals the linear complexity nature of the proposed BFOM algorithm.

\subsection{Evaluation of IRS Based Scheme}

Now, to evaluate the performance of the proposed IRS based scheme, we simulate the case of $N=100$ and $K=10$ with $K_R=70\,\mathrm{dB}$. 
The average secrecy rate versus the distance $D_k=D_1=D_2=\cdots=D_{10}$ is shown in Fig.~4c.
It can be seen from Fig.~4c that the proposed IRS based scheme outperforms the proposed SCO based AAUC when $D_k=20$ due to the smaller number of transmission phases with IRS.
However, the IRS based scheme performs worse than the proposed SCO based AAUC when $D_k=60$.
This is because the power of the scattered signal from IRS cannot exceed the received signal power at IRS as seen from \eqref{IRS}.
As a result, when the IRS elements are far from the BS, the scattered signal from IRS is weak.
In contrast, the helping users in the SCO based AAUC scheme can use a much larger transmit power than its received signal power.

\subsection{Rayleigh Fading Channel}

Finally, to demonstrate the versatility of the proposed method, we simulate a case of non-Rician fading channel and the eavesdropper is not in the same angle as one of the users.
In particular, we consider the case of $N=100$ and $K_R=0$ (i.e., Rayleigh fading).
The eavesdropper is randomly located at $(D_E\,\mathrm{cos}\,\theta_E,D_E\,\mathrm{sin}\,\theta_E)$ with $D_E\sim \mathcal{U}(100,500)$ and $\theta_E\sim \mathcal{U}(-\pi,0)$.
The users are randomly located at $(D_k\,\mathrm{cos}\,\theta_k,D_k\,\mathrm{sin}\,\theta_k)$ with $\theta_k\sim \mathcal{U}(0,\pi)$, where $D_k\sim \mathcal{U}(100,500)$ for $k\neq K$ and $D_K=1000$.

In this setting, the proposed AAUC scheme still designs $\mathbf{v}$ and $\mathbf{w}$ by solving $\rm{P}1$ (i.e, uses two transmission phases, forces $|\mathbf{g}^H_K\mathbf{v}|=0$, and adopts the angular secrecy model).
The average secrecy rate versus the number of users $K\in\{20,100\}$ is shown in Fig.~5a.
It can be seen that the performance of the closed-form AAUC is not acceptable due to the small $N/K$.
On the other hand, the secrecy rates of SCO based AAUC and BFOM based AAUC are significantly higher than that of direct transmission when $K=100$.
This is because the proposed AAUC exploits path diversity, and the helping users can relay the information to the far-away user as shown in Fig. 5b.
However, the proposed AAUC schemes perform slightly worse than traditional direct transmission when $K=20$ due to $K<N$ (as shown in Fig.~5c).
Fortunately, the proposed method can be executed in combination with direct transmission, and the hybrid strategy can select between executing direct transmission or AAUC.
By switching to the other transmission mode once the current mode is not satisfactory, the hybrid transmission strategy can always achieve high security.

\section{Conclusions}

This paper studied the physical-layer security problem in a massive MIMO multicasting system.
Since the objective function in this problem is not analytical, an angular secrecy model, which matches the numerical integration very well, was proposed.
By adopting this model, secure beamforming design was obtained via the SCO based AUCC algorithm.
In the large-scale settings, two fast algorithms were derived to tackle the curse of high dimensionality.
Simulation results showed that the proposed SCO algorithm achieves higher security than direct transmission.
Furthermore, the proposed fast algorithms significantly reduce the execution time compared to the SCO while still achieving satisfactory performance.

\appendices
\section{Proof of Property (iii)}

To prove this property, we first notice that
\begin{align}
&S(R,\mathbf{w}|\sigma_E)
\nonumber\\
&=\frac{1}{D_K}\int_{0}^{D_K}\mathbb{E}_{\{\phi_k,\delta_k\}}\Bigg\{\Bigg[R
\nonumber\\
&\quad{}
-\frac{1}{2}\mathrm{log}_2
\left(1+\frac{\Big|\mathbf{h}^H_E(\rho,\{\phi_k,\delta_k\})\mathbf{w}
\Big|^2}{\sigma^2_E}\right)\Bigg]^+\Bigg\}\mathrm{d}\rho
\nonumber\\
&\geq
\Bigg[R-\frac{1}{D_K}\int_{0}^{D_K}\mathbb{E}_{\{\phi_k,\delta_k\}}\Bigg\{
\nonumber\\
&\quad{}
\frac{1}{2}\mathrm{log}_2
\left(1+\frac{\Big|\mathbf{h}^H_E(\rho,\{\phi_k,\delta_k\})\mathbf{w}
\Big|^2}{\sigma^2_E}\right)\Bigg\}\mathrm{d}\rho\Bigg]^+
\nonumber\\
&\geq
\Bigg[R-\frac{1}{2}\mathrm{log}_2
\Bigg(1+\frac{1}{\sigma^2_ED_K}
\nonumber\\
&\quad{}
\times
\int_{0}^{D_K}\mathbb{E}_{\{\phi_k,\delta_k\}}\left[\Big|\mathbf{h}^H_E(\rho,\{\phi_k,\delta_k\})\mathbf{w}
\Big|^2\right]\mathrm{d}\rho\Bigg)\Bigg]^+, \label{AppA1}
\end{align}
where the first inequality is due to the convexity of $[x]^+=\mathrm{max}(x,0)$, and the second inequality is due to Jensen's inequality.
On the other hand, we compute
\begin{align}
&
\frac{1}{D_K}\int_{0}^{D_K}\mathbb{E}_{\{\phi_k,\delta_k\}}\left[\Big|\mathbf{h}^H_E(\rho,\{\phi_k,\delta_k\})\mathbf{w}
\Big|^2\right]\mathrm{d}\rho
\nonumber\\
&
=\frac{1}{D_K}\int_{0}^{D_K}\mathbb{E}_{\{\phi_k,\delta_k\}}\Bigg[\Big|\mathop{\sum}_{k=1}^{K-1}
w_k\sqrt{\varrho_0\left(\frac{d_{E,k}(\rho)}{d_0}\right)^{-\alpha}}
\nonumber\\
&\quad{}
\times
\left(\sqrt{\frac{K_R}{1+K_R}}\,\mathrm{e}^{\mathrm{j}\phi_k}
+\sqrt{\frac{1}{1+K_R}}\, \delta_{k}\right)
\Big|^2
\Bigg]\mathrm{d}\rho
\nonumber\\
&=
\frac{1}{D_K}\int_{0}^{D_K}\left[
\mathop{\sum}_{k=1}^{K-1}
|w_k|^2\,\varrho_0\left(\frac{d_{E,k}(\rho)}{d_0}\right)^{-\alpha}
\right]\mathrm{d}\rho
\nonumber\\
&=
\mathop{\sum}_{k=1}^{K-1}
|w_k|^2\,\frac{\varrho_0}{D_K}\int_{0}^{D_K}
\Big[(\rho\,\mathrm{cos}\,\theta_K-D_k\,\mathrm{cos}\,\theta_k)^2
\nonumber\\
&\quad{}
+(\rho\,\mathrm{sin}\,\theta_K-D_k\,\mathrm{sin}\,\theta_k)^2\Big]
^{-\alpha/2}\mathrm{d}\rho
\nonumber\\
&
=\mathbf{w}^H\mathbf{J}\mathbf{w}, \label{AppA2}
\end{align}
where the first equality is obtained by putting $\mathbf{h}_{E}(\rho,\{\phi_k,\delta_k\})$ of \eqref{channel} into $|\mathbf{h}^H_E(\rho,\{\phi_k,\delta_k\})\mathbf{w}|^2$, and the second equality is due to the independence between $\{\phi_k\}$ and $\{\delta_k\}$ together with $\mathbb{E}[|\delta_k|^2]=1$,
and the third equality is obtained from the expression \eqref{dk} of $d_{E,k}(\rho)$.
Combining \eqref{AppA1} and \eqref{AppA2}, the proof is immediately completed.

\section{Proof of Proposition 1}

To prove part (i), consider the following inequality
\begin{align}
\Big(\mathbf{z}-\mathbf{z}^\star\Big)^H\frac{\mathbf{U}^H\mathbf{g}_{k}\mathbf{g}^H_{k}\mathbf{U}}{\sigma_k^2}
\Big(\mathbf{z}-\mathbf{z}^\star\Big)&\geq 0. \label{AppB1}
\end{align}
This always holds due to $\mathbf{U}^H\mathbf{g}_{k}\mathbf{g}^H_{k}\mathbf{U}\succeq \bm{0}$.
Then from \eqref{AppB1} we further have
\begin{align}
\frac{|\mathbf{g}^H_{k}\mathbf{U}\mathbf{z}|^2}{\sigma_k^2}\geq&
2\mathrm{Re}\left[\frac{(\mathbf{z}^\star)^H\mathbf{U}^H\mathbf{g}_{k}\mathbf{g}^H_{k}\mathbf{U}\mathbf{z}}{\sigma_k^2}\right]
-\frac{|\mathbf{g}^H_{k}\mathbf{U}\mathbf{z}^\star|^2}{\sigma_k^2},
\end{align}
which leads to
\begin{align}
&\frac{1}{2}\mathrm{log}_2\left(1+\frac{|\mathbf{g}^H_{k}\mathbf{U}\mathbf{z}|^2}{\sigma_k^2}\right)
\nonumber\\
&
\geq
\frac{1}{2}\mathrm{log}_2\left(1+2\mathrm{Re}\left[\frac{(\mathbf{z}^\star)^H\mathbf{U}^H\mathbf{g}_{k}\mathbf{g}^H_{k}\mathbf{U}\mathbf{z}}{\sigma_k^2}\right]
-\frac{|\mathbf{g}^H_{k}\mathbf{U}\mathbf{z}^\star|^2}{\sigma_k^2}\right). \label{AppB2}
\end{align}
On the other hand, since $-1/2\mathrm{log}_2(\cdot)$ is convex, it must be greater than its first-order Taylor expansion, leading to
\begin{align}
&-\frac{1}{2}\mathrm{log}_2
\left(1+\frac{\lambda\mathbf{w}^H\mathbf{J}\mathbf{w}}{\sigma^2_E}\right)
\geq
-\frac{\lambda\mathbf{w}^H\mathbf{J}\mathbf{w}-\lambda\left(\mathbf{w}^\star\right)^H\mathbf{J}\mathbf{w}^\star}{2\mathrm{ln}2\left[\sigma_E^2+\lambda\left(\mathbf{w}^\star\right)^H\mathbf{J}\mathbf{w}^\star\right]}
\nonumber\\
&~~~~~~~~~~~~~~~~~~~~~~~
-\frac{1}{2}\mathrm{log}_2
\left(1+\frac{\lambda\left(\mathbf{w}^\star\right)^H\mathbf{J}\mathbf{w}^\star}{\sigma^2_E}\right). \label{AppB3}
\end{align}
By adding the equations \eqref{AppB2} and \eqref{AppB3}, the inequality $\widetilde{\Phi}_k(\mathbf{z},\mathbf{w}|\mathbf{z}^\star,\mathbf{w}^\star)\leq \Phi_k(\mathbf{z},\mathbf{w})$ is immediately proved.
Finally, due to $\mathbf{h}_K\mathbf{h}_K^H\succeq \bm{0}$, it is clear that $(\mathbf{w}-\mathbf{w}^\star)^H\frac{\mathbf{h}_K\mathbf{h}_K^H}{\sigma_K^2}
(\mathbf{w}-\mathbf{w}^\star)\geq 0$ and thus
\begin{align}
\frac{|\mathbf{h}_K^H\mathbf{w}|^2}{\sigma_K^2}
\geq&2\mathrm{Re}\left[\frac{(\mathbf{w}^\star)^H\mathbf{h}_K\mathbf{h}_K^H\mathbf{w}}{\sigma_K^2}\right]
-\frac{|\mathbf{h}_K^H\mathbf{w}^\star|^2}{\sigma_K^2}. \label{AppB4}
\end{align}
Combining \eqref{AppB4} and \eqref{AppB3} gives $\widetilde{\Upsilon}(\mathbf{w}|\mathbf{w}^\star)\leq \Upsilon(\mathbf{w})$.

To prove part (ii), we put $\mathbf{z}=\mathbf{z}^\star$ and $\mathbf{w}=\mathbf{w}^\star$ into $\widetilde{\Phi}_k$ in \eqref{Phiapp} and $\widetilde{\Upsilon}$ in \eqref{Upsilonapp}.
Then we immediately obtain $\widetilde{\Phi}_k(\mathbf{z}^\star,\mathbf{w}^\star|\mathbf{z}^\star,\mathbf{w}^\star)=\Phi_k(\mathbf{z}^\star,\mathbf{w}^\star)$ and
$\widetilde{\Upsilon}(\mathbf{w}^\star|\mathbf{w}^\star)=\Upsilon(\mathbf{w}^\star)$.

To prove part (iii), we calculate the following derivatives:
\begin{subequations}
\begin{align}
&\nabla_{\mathbf{z}}\widetilde{\Phi}_k(\mathbf{z},\mathbf{w}|\mathbf{z}^\star,\mathbf{w}^\star)
=\frac{\mathbf{U}^H\mathbf{g}_{k}\mathbf{g}^H_{k}\mathbf{U}\mathbf{z}^\star}{\sigma_k^2\,2\mathrm{ln}2}
\nonumber\\
&
\Bigg(1+2\mathrm{Re}\left[\frac{(\mathbf{z}^\star)^H\mathbf{U}^H\mathbf{g}_{k}\mathbf{g}^H_{k}\mathbf{U}\mathbf{z}}{\sigma_k^2}\right]
-\frac{|\mathbf{g}^H_{k}\mathbf{U}\mathbf{z}^\star|^2}{\sigma_k^2}
\Bigg)^{-1}, \label{AppB5}
\\
&\nabla_{\mathbf{z}}\Phi_k(\mathbf{z},\mathbf{w})=
\left(1+\frac{|\mathbf{g}^H_{k}\mathbf{U}\mathbf{z}|^2}{\sigma_k^2}\right)^{-1}
\frac{\mathbf{U}^H\mathbf{g}_{k}\mathbf{g}^H_{k}\mathbf{U}\mathbf{z}}{\sigma_k^2\,2\mathrm{ln}2},
\\
&\nabla_{\mathbf{w}}\widetilde{\Phi}_k(\mathbf{z},\mathbf{w}|\mathbf{z}^\star,\mathbf{w}^\star)
=
-\frac{1}{2\mathrm{ln}2}\frac{\lambda\mathbf{J}\mathbf{w}}
{\sigma_E^2+\lambda\left(\mathbf{w}^\star\right)^H\mathbf{J}\mathbf{w}^\star},
\\
&\nabla_{\mathbf{w}}\Phi_k(\mathbf{z},\mathbf{w})=
-\frac{1}{2\mathrm{ln}2}\,\frac{\lambda\mathbf{J}\mathbf{w}}
{\sigma_E^2+\lambda\mathbf{w}^H\mathbf{J}\mathbf{w}},
\\
&\nabla_{\mathbf{w}}\widetilde{\Upsilon}(\mathbf{w}|\mathbf{w}^\star)
=\Bigg(1+2\mathrm{Re}\left[\frac{(\mathbf{w}^\star)^H\mathbf{h}_K\mathbf{h}_K^H\mathbf{w}}{\sigma_K^2}\right]
\nonumber\\
&\quad\quad\quad\quad\quad\quad\quad
-\frac{|\mathbf{h}^H_K\mathbf{w}^\star|^2}{\sigma_K^2}\Bigg)^{-1}
\,\frac{\mathbf{h}_K\mathbf{h}_K^H\mathbf{w}^\star}{\sigma_k^2\,2\mathrm{ln}2}
\nonumber\\
&\quad\quad\quad\quad\quad\quad\quad
-\frac{1}{2\mathrm{ln}2}\,\frac{\lambda\mathbf{J}\mathbf{w}}
{\sigma_E^2+\lambda\left(\mathbf{w}^\star\right)^H\mathbf{J}\mathbf{w}^\star},
\\
&\nabla_{\mathbf{w}}\Upsilon(\mathbf{w})=
\left(1+\frac{|\mathbf{h}_K^H\mathbf{w}|^2}{\sigma_K^2}\right)^{-1}\,
\frac{\mathbf{h}_K\mathbf{h}_K^H\mathbf{w}}{\sigma_K^2\,2\mathrm{ln}2}
\nonumber\\
&\quad\quad\quad\quad\quad
-\frac{1}{2\mathrm{ln}2}\,\frac{\lambda\mathbf{J}\mathbf{w}}
{\sigma_E^2+\lambda\mathbf{w}^H\mathbf{J}\mathbf{w}}. \label{AppB6}
\end{align}
\end{subequations}
Then by putting $\mathbf{z}=\mathbf{z}^{\star}$ and $\mathbf{w}=\mathbf{w}^{\star}$ into \eqref{AppB5}--\eqref{AppB6}, the proof for part (iii) is completed.

\section{Proof of Proposition 2}

To prove this proposition, we consider the problem $\mathrm{P}2$ with fixed $\mathbf{w}=\mathbf{w}^*$.
In such a case, the terms $|\mathbf{h}_K^H\mathbf{w}^*|^2$ and $\lambda(\mathbf{w}^*)^H\mathbf{J}\mathbf{w}^*$ in the objective function of $\mathrm{P}2$ are constants, which can be dropped without changing the solution of $\mathbf{z}$.
Further removing  $\frac{1}{2}\mathrm{log}_2(\cdot)$ in the objective function due to its monotonicity, $\mathrm{P}2$ with fixed $\mathbf{w}=\mathbf{w}^*$ is transformed into
\begin{align}
\mathop{\mathrm{max}}_{\substack{\mathbf{z}}}
~&
\mathop{\mathrm{min}}_{k\neq K}~\frac{|\mathbf{g}^H_{k}\mathbf{U}\mathbf{z}|^2}{\sigma_k^2},
\nonumber\\
\mathrm{s. t.}~~&||\mathbf{z}||^2_2\leq 2P_{\rm{max}}-||\mathbf{w}^*||^2_2. \label{AppC1}
\end{align}
To maximize $\mathbf{z}^H\mathbf{U}^H\mathbf{g}_k\mathbf{g}_k^H\mathbf{U}\mathbf{z}$ for $k\neq K$, the optimal $\mathbf{z}^*\in\mathrm{span}\left(\mathbf{U}^H\mathbf{g}_1,\cdots,\mathbf{U}^H\mathbf{g}_{K-1}\right)$.
Therefore, without loss of generality, we can set
\begin{align}
&\mathbf{z}=\sum_{k=1}^{K-1}\sqrt{\xi_k}\,\mathrm{e}^{\mathrm{j}\zeta_k}\frac{\mathbf{U}^H\mathbf{g}_k}{||\mathbf{U}^H\mathbf{g}_k||_2}, \label{v*}
\end{align}
with $\{\xi_k\geq 0\}$ being real nonnegative coefficients and $\{\zeta_k\}$ being the corresponding phases.

When $N\rightarrow\infty$, we have $|\mathbf{g}_j^H\mathbf{g}_k|^2/||\mathbf{g}_k||_2^2\rightarrow 0$ for all $k\neq j$.
Adding to $\mathbf{U}\mathbf{U}^H\rightarrow \mathbf{I}$ as $N\rightarrow\infty$, the quantity $|\mathbf{g}_j^H\mathbf{U}\mathbf{U}^H\mathbf{g}_k|/||\mathbf{U}^H\mathbf{g}_k||_2\rightarrow 0$ for all $k\neq j$.
Therefore, by putting $\mathbf{z}$ in \eqref{v*} into problem \eqref{AppC1}, we obtain $|\mathbf{g}_k^H\mathbf{U}\mathbf{z}|^2=\xi_k||\mathbf{U}^H\mathbf{g}_k||_2^2$ in the objective function and $||\mathbf{z}||_2^2=\sum_{k=1}^{K-1}\xi_k$ in the constraint, meaning that the phases $\{\zeta_k\}$ would not participate in the optimization.
To this end, we can set $\zeta_k=0$ for all $k$ in \eqref{v*}.
Finally, putting \eqref{v*} and $\zeta_k=0$ into \eqref{AppC1}, problem \eqref{AppC1} under $N\rightarrow\infty$ is equivalently written as
\begin{align}
\mathop{\mathrm{max}}_{\substack{\bm{\xi}}}
~&
\mathop{\mathrm{min}}_{k\neq K}~\frac{\xi_k||\mathbf{U}^H\mathbf{g}_{k}||_2^2}{\sigma_k^2},
\nonumber\\
\mathrm{s. t.}~~&\sum_{k=1}^{K-1}\xi_k\leq 2P_{\rm{max}}-||\mathbf{w}^*||^2_2, \label{AppC2}
\end{align}
where $\bm{\xi}=[\xi_1,\cdots,\xi_{K-1}]^T\in\mathbb{R}_+^{(K-1)\times 1}$.
For the above problem, the optimal $\xi_k^*$ must satisfy
\begin{align}
\frac{\xi_1||\mathbf{U}^H\mathbf{g}_{1}||_2^2}{\sigma_1^2}=\cdots=\frac{\xi_{K-1}||\mathbf{U}^H\mathbf{g}_{K-1}||_2^2}{\sigma_{K-1}^2}. \label{AppC3}
\end{align}
Otherwise, we can always increase $\xi_j$ with $j=\mathop{\mathrm{argmin}}_{i\neq K}~\xi_i||\mathbf{U}^H\mathbf{g}_{i}||_2^2/\sigma_i^2$ by a small quantity $\Delta \xi$ and decrease $\xi_j$ with $j=\mathop{\mathrm{argmax}}_{i\neq K}~\xi_i||\mathbf{U}^H\mathbf{g}_{i}||_2^2/\sigma_i^2$ by $\Delta \xi$, which would make $\sum_{k=1}^{K-1}\xi_k$ unchanged but increase the objective function of \eqref{AppC2}.
Combining \eqref{AppC3} and the constraint of \eqref{AppC2}, the optimal $\xi_k^*$ to \eqref{AppC2} is given by
\begin{align}
&\xi_k^*=\frac{ \left(2P_{\rm{max}}-||\mathbf{w}^*||^2_2\right)\sigma_k^2}{\left(\mathop{\sum}_{i=1}^{K-1}\sigma_i^2/||\mathbf{U}^H\mathbf{g}_i||^2_2\right)||\mathbf{U}^H\mathbf{g}_k||_2^2}.
\end{align}
Putting $\xi_k=\xi_k^*$ into \eqref{v*}, the proposition is obtained.

\section{Proof of Proposition 3}

To solve $\mathrm{P}3$, it can be seen that $\Psi(\mathbf{w})$ inside the minimum function is a monotonically decreasing function of $||\mathbf{w}||_2^2$.
Its maximum value $\Psi_{\mathrm{max}}>0$ is obtained at $||\mathbf{w}||_2^2=0$, while its minimum value $\Psi_{\mathrm{min}}<0$  is obtained at $||\mathbf{w}||_2^2=2P_{\rm{max}}$.
On the other hand, $\Upsilon(\mathbf{w})$ inside the minimum function is a monotonically increasing function of $||\mathbf{w}||_2^2$.
Its maximum value $\Upsilon_{\mathrm{max}}>0$ is obtained at $||\mathbf{w}||_2^2=2P_{\rm{max}}$, while its minimum value $\Upsilon_{\mathrm{min}}=0$ is obtained at $||\mathbf{w}||_2^2=0$.

Based on the above observations, we now prove that the optimal $\mathbf{w}^*$ satisfies $\Psi(\mathbf{w}^*)=\Upsilon(\mathbf{w}^*)$ by contradiction.
In particular, assume that $\Psi(\mathbf{w}^*)<\Upsilon(\mathbf{w}^*)$.
Then we must have $\Upsilon(\mathbf{w}^*)>0$ and thus $||\mathbf{w}^*||_2^2>0$.
As a result, we can always decrease $||\mathbf{w}||_2^2$ to increase $\Psi(\mathbf{w})$ such that $\mathrm{min}[\Psi(\mathbf{w}),\Upsilon(\mathbf{w})]$ is increased.
This is in contradiction to $\mathbf{w}^*$ being optimal.
On the other hand, if $\Psi(\mathbf{w}^*)>\Upsilon(\mathbf{w}^*)$, we must have $\Psi(\mathbf{w}^*)>0$ and thus $||\mathbf{w}^*||_2^2<2P_{\rm{max}}$.
Then we can always increase $||\mathbf{w}||_2^2$ to increase $\Upsilon(\mathbf{w})$ such that $\mathrm{min}[\Psi(\mathbf{w}),\Upsilon(\mathbf{w})]$ is increased.
Again, contradiction is established.

Using $\Psi(\mathbf{w}^*)=\Upsilon(\mathbf{w}^*)$, we can add a constraint $\Psi(\mathbf{w})=\Upsilon(\mathbf{w})$ to $\mathrm{P}3$ without changing the problem, and $\mathrm{P}3$ is equivalently written as
\begin{align}
\mathop{\mathrm{max}}_{\substack{\mathbf{w}}}
~\Upsilon(\mathbf{w}),\quad\mathrm{s.t.}~~\Psi(\mathbf{w})=\Upsilon(\mathbf{w}).
\end{align}
By dropping the logarithm function in $\Upsilon$ and rearranging $\Psi(\mathbf{w})=\Upsilon(\mathbf{w})$, the above problem is further simplified into
\begin{align}
\mathop{\mathrm{max}}_{\substack{\mathbf{w}}}
~&\frac{1+|\mathbf{h}_K^H\mathbf{w}|^2/\sigma_K^2}{1+\lambda\mathbf{w}^H\mathbf{J}\mathbf{w}/\sigma^2_E},\quad\mathrm{s.t.}~~
\mathbf{w}^H\mathbf{\Xi}\mathbf{w}=1, \label{AppD1}
\end{align}
where the constraint is obtained from $\Psi(\mathbf{w})=\Upsilon(\mathbf{w})$ and $\mathbf{\Xi}$ is given by \eqref{Xi}.
Using $\mathbf{w}^H\mathbf{\Xi}\mathbf{w}=1$, the objective function of \eqref{AppD1} is equal to
\begin{align}
&\mathbf{w}^H(\mathbf{\Xi}+\mathbf{h}_K\mathbf{h}_K^H/\sigma_K^2)\mathbf{w}
\big/\left[\mathbf{w}^H(\mathbf{\Xi}+\lambda\mathbf{J}/\sigma_E^2)\mathbf{w}\right].
\nonumber
\end{align}
Furthermore, by dropping the constraint $\mathbf{w}^H\mathbf{\Xi}\mathbf{w}=1$, the problem \eqref{AppD1} is relaxed into
\begin{align}
\mathop{\mathrm{max}}_{\substack{\mathbf{w}}}
~&\frac{\mathbf{w}^H(\mathbf{\Xi}+\mathbf{h}_K\mathbf{h}_K^H/\sigma_K^2)\mathbf{w}}{\mathbf{w}^H(\mathbf{\Xi}+\lambda\mathbf{J}/\sigma_E^2)\mathbf{w}}. \label{AppD2}
\end{align}
It can be seen that the objective function of \eqref{AppD2} would not change if we scale $\mathbf{w}$ by any positive factor.
As a result, no matter what value the optimal solution of $\mathbf{w}$ to \eqref{AppD2} takes, we can always scale it such that $\mathbf{w}^H\mathbf{\Xi}\mathbf{w}=1$ is satisfied.
This means that the relaxed problem \eqref{AppD2} is equivalent to \eqref{AppD1}, and we can focus on solving \eqref{AppD2} in the following.

To solve \eqref{AppD2}, a new variable $\mathbf{q}=(\mathbf{\Xi}+\lambda\mathbf{J}/\sigma_E^2)^{1/2}\mathbf{w}$ is introduced to replace $\mathbf{w}$.
Then \eqref{AppD2} is transformed into
\begin{align}
\mathop{\mathrm{max}}_{\substack{\mathbf{q}}}
~&\mathbf{q}^H(\mathbf{\Xi}+\lambda\mathbf{J}/\sigma_E^2)^{-1/2}(\mathbf{\Xi}+\mathbf{h}_K\mathbf{h}_K^H/\sigma_K^2)
\nonumber\\
&\times
(\mathbf{\Xi}+\lambda\mathbf{J}/\sigma_E^2)^{-1/2}\mathbf{q}\Big/\mathbf{q}^H\mathbf{q},
\end{align}
which is the standard eigenvalue problem, and the optimal $\mathbf{q}^*$ is the dominant eigenvector of $(\mathbf{\Xi}+\lambda\mathbf{J}/\sigma_E^2)^{-1/2}(\mathbf{\Xi}+\mathbf{h}_K\mathbf{h}_K^H/\sigma_K^2)
(\mathbf{\Xi}+\lambda\mathbf{J}/\sigma_E^2)^{-1/2}$.
Putting $\mathbf{q}=\mathbf{q}^*$ into $\mathbf{q}=(\mathbf{\Xi}+\lambda\mathbf{J}/\sigma_E^2)^{1/2}\mathbf{w}$ and scaling $\mathbf{w}$ such that $\mathbf{w}^H\mathbf{\Xi}\mathbf{w}=1$, the equation \eqref{w*} is obtained. This completes the proof.

\section{Proof of Proposition 4}

To prove this proposition, we first compute
\begin{align}
&||\nabla_{\mathbf{x}}\Theta^{[n]}
\left(\mathbf{x},\bm{\gamma}\right)-\nabla_{\mathbf{x}}\Theta^{[n]}
\left(\mathbf{x}',\bm{\gamma}\right)||_2
\nonumber\\
&=
\frac{2\gamma_K||\mathbf{h}_K||_2^2}{\sigma_K^2}||\mathbf{x}-\mathbf{x}'||_2
\leq\frac{2||\mathbf{h}_K||_2^2}{\sigma_K^2}||\mathbf{x}-\mathbf{x}'||_2. \label{E1}
\end{align}
It can be seen from \eqref{E1} and \eqref{Lips} that $\widehat{L}$ satisfies \eqref{Lips1}.
Next, we will prove that $\widehat{L}$ satisfies \eqref{Lips2}.
In particular, based on the gradient in \eqref{gradient1}, we have
\begin{align}
&||\nabla_{\mathbf{x}}\Theta^{[n]}
\left(\mathbf{x},\bm{\gamma}\right)-\nabla_{\mathbf{x}}\Theta^{[n]}
\left(\mathbf{x},\bm{\gamma}'\right)||_2
\nonumber\\
&=\Big|\Big|
\mathop{\sum}_{k=1}^{K-1}(\gamma_k-\gamma_k')\mathbf{r}_k^{[n]}
-(\gamma_K-\gamma_K')\frac{2||\mathbf{h}_K||_2^2\mathbf{x}}{\sigma_K^2}
\Big|\Big|_2
\nonumber\\
&
\leq
\mathop{\sum}_{k=1}^{K-1}|\gamma_k-\gamma_k'|\,||\mathbf{r}_k^{[n]}||_2
+|\gamma_K-\gamma_K'|\,\frac{2||\mathbf{h}_K||_2^2||\mathbf{x}||_2}{\sigma_K^2}
\nonumber\\
&
\leq
\mathop{\sum}_{k=1}^{K-1}|\gamma_k-\gamma_k'|\,||\mathbf{r}_k^{[n]}||_2
+|\gamma_K-\gamma_K'|\,\frac{2\sqrt{2P_{\rm{max}}}||\mathbf{h}_K||_2^2}{\sigma_K^2}
\nonumber\\
&
\leq
\widehat{L}
\mathop{\sum}_{k=1}^{K}|\gamma_k-\gamma_k'|
=\widehat{L}||\bm{\gamma}-\bm{\gamma}'||_1, \label{E2}
\end{align}
where the first inequality is due to $||\mathbf{a}+\mathbf{b}||\leq ||\mathbf{a}||+||\mathbf{b}||$,
the second inequality is due to $\mathbf{x}\in\mathcal{A}$,
and the last inequality is due to $||\mathbf{r}_k^{[n]}||_2\leq \widehat{L}$ and $2\sqrt{2P_{\rm{max}}}||\mathbf{h}_K||_2^2/\sigma_K^2\leq \widehat{L}$.

Finally, it is clear that \eqref{Lips3} holds since $||\nabla_{\bm{\gamma}}\Theta^{[n]}
\left(\mathbf{x},\bm{\gamma}\right)-\nabla_{\bm{\gamma}}\Theta^{[n]}
\left(\mathbf{x},\bm{\gamma}'\right)||_{\infty}=0$.
Therefore, we only need to show that $\widehat{L}$ satisfies \eqref{Lips4}.
To this end, based on the gradient in \eqref{gradient2}, the left hand side of \eqref{Lips4} is equal to
\begin{align}
&||\nabla_{\bm{\gamma}}\Theta^{[n]}
\left(\mathbf{x},\bm{\gamma}\right)-\nabla_{\bm{\gamma}}
\Theta^{[n]}
\left(\mathbf{x}',\bm{\gamma}\right)||_{\infty}
\nonumber\\
&
=\mathrm{max}~\Bigg[
\Big|(\mathbf{r}_1^{[n]})^T(\mathbf{x}-\mathbf{x}')\Big|,\cdots,\Big|(\mathbf{r}_{K-1}^{[n]})^T(\mathbf{x}-\mathbf{x}')\Big|,
\nonumber\\
&\quad{}
\frac{||\mathbf{h}_K||_2^2}{\sigma_K^2}\Big|||\mathbf{x}||_2^2-||\mathbf{x}'||_2^2\Big|
\Bigg]
\nonumber\\
&
\leq
\mathrm{max}~\Bigg[
||\mathbf{r}_1^{[n]}||_2\,||\mathbf{x}-\mathbf{x}'||_2,\cdots,||\mathbf{r}_{K-1}^{[n]}||_2\,||\mathbf{x}-\mathbf{x}'||_2,
\nonumber\\
&\quad{}
\frac{||\mathbf{h}_K||_2^2}{\sigma_K^2}\Big|||\mathbf{x}||_2^2-||\mathbf{x}'||_2^2\Big|
\Bigg], \label{E3}
\end{align}
where the inequality is due to $|\mathbf{a}^T\mathbf{b}|\leq ||\mathbf{a}||_2||\mathbf{b}||_2$.
Moreover,
\begin{align}
&\frac{||\mathbf{h}_K||_2^2}{\sigma_K^2}\Big|||\mathbf{x}||_2^2-||\mathbf{x}'||_2^2\Big|
\nonumber\\
&=\frac{||\mathbf{h}_K||_2^2}{\sigma_K^2}\left(||\mathbf{x}||_2+||\mathbf{x}'||_2\right)
\Big|||\mathbf{x}||_2-||\mathbf{x}'||_2\Big|
\nonumber\\
&\leq \frac{||\mathbf{h}_K||_2^2}{\sigma_K^2}\times2\sqrt{2P_{\rm{max}}}
||\mathbf{x}-\mathbf{x}'||_2, \label{E4}
\end{align}
where the inequality is due to $\mathbf{x},\mathbf{x}'\in\mathcal{A}$ and $\Big|||\mathbf{a}||-||\mathbf{b}||\Big|\leq ||\mathbf{a}-\mathbf{b}||$.
By putting \eqref{E4} into \eqref{E3}, the proof is completed.

\end{document}